\documentclass[a4paper]{amsart}

\title[Option pricing in bilateral Gamma stock models]{Option pricing in bilateral Gamma stock models}
\author{Uwe K{\"u}chler \and Stefan Tappe}
\address{Humboldt University of Berlin, Institute of Mathematics, Unter den Linden 6, D-10099 Berlin, Germany; ETH Z\"urich, Department of Mathematics, R\"amistrasse 101, CH-8092 Z\"urich, Switzerland}
\email{kuechler@mathematik.hu-berlin.de, stefan.tappe@math.ethz.ch}

\usepackage{amscd}
\usepackage{amsmath}
\usepackage{amssymb}
\usepackage{amsthm}
\usepackage{bbm}

\newif\ifpdf
\ifx\pdfoutput\undefined
   \pdffalse        
\else
   \pdfoutput=1     
   \pdftrue
\fi

\ifpdf
   \usepackage[pdftex]{graphicx}
   \pdfadjustspacing=1
   \pdfcompresslevel=9
\else
   \usepackage{graphicx}
\fi

\frenchspacing

\numberwithin{equation}{section}
\swapnumbers

\newtheorem{satz}{Satz}[section]

\newtheorem{theorem}[satz]{Theorem}
\newtheorem{proposition}[satz]{Proposition}
\newtheorem{corollary}[satz]{Corollary}
\newtheorem{lemma}[satz]{Lemma}

\newtheorem{definition}[satz]{Definition}

\newtheorem{remark}[satz]{Remark}

\begin{document}

\maketitle\thispagestyle{empty}

\begin{abstract}
In the framework of bilateral Gamma stock models we seek for
adequate option pricing measures, which have an economic
interpretation and allow numerical calculations of option prices.
Our investigations encompass Esscher transforms, minimal entropy
martingale measures, $p$-optimal martingale measures, bilateral Esscher transforms and the minimal martingale measure. We illustrate
our theory by a numerical example.

\bigskip

\textbf{Key Words:} Bilateral Gamma stock model, bilateral Esscher transform, minimal
martingale measure, option pricing.
\end{abstract}

\keywords{91G20, 60G51}

\section{Introduction}

An issue in continuous time finance is to find realistic and
analytically tractable models for price evolutions of risky
financial assets. In this text, we consider exponential L\'evy
models
\begin{align}\label{exp-Levy-model}
\left\{
\begin{array}{rcl}
S_t & = & S_0 e^{X_t}
\\ B_t & = & e^{r t}
\end{array}
\right.
\end{align}
consisting of two financial assets $(S,B)$, one dividend paying
stock $S$ with dividend rate $q \geq 0$, where $X$ denotes a L\'evy
process and one risk free asset $B$, the bank account with fixed
interest rate $r \geq 0$. Note that the classical Black-Scholes
model is a special case by choosing $X_t = \sigma W_t + (\mu -
\frac{\sigma^2}{2})t$, where $W$ is a Wiener process, $\mu \in
\mathbb{R}$ denotes the drift and $\sigma > 0$ the volatility.

Although the Black-Scholes model is a standard model in financial
mathematics, it is well-known that it only provides a poor fit to
observed market data, because typically the empirical densities
possess heavier tails and much higher located modes than fitted normal
distributions.

Several authors therefore suggested more sophisticated L\'evy
processes with a jump component. We mention the Variance Gamma
processes \cite{Madan-1990,Madan}, hyperbolic processes
\cite{Eberlein-Keller}, normal inverse Gaussian processes
\cite{Barndorff}, generalized hyperbolic processes
\cite{Eberlein-Keller-Prause,Eberlein-Prause,Prause}, CGMY processes
\cite{CGMY} and Meixner processes \cite{Schoutens-Meixner}. A survey
about L\'evy processes used for applications to finance can for
instance be found in \cite[Chap. 4]{Cont-Tankov} or \cite[Chap.
5.3]{Schoutens}.

Recently, the class of bilateral Gamma processes, which we shall
henceforth deal with in this text, was proposed in
\cite{Kuechler-Tappe}. We also mention the related article
\cite{Kuechler-Tappe-shapes}, where the shapes of their densities
are investigated.

Now, let $(S,B)$ be an exponential L\'evy model of the type
(\ref{exp-Levy-model}). In practice, we often have to deal with
adequate pricing of European options $\Phi(S_T)$, where $T > 0$
denotes the time of maturity and $\Phi : \mathbb{R} \rightarrow
\mathbb{R}$ the payoff function. For example, the payoff profile of
a European call option is $\Phi(x) = (x-K)^+$.

The option price is given by $e^{-r T}
\mathbb{E}_{\mathbb{Q}}[\Phi(S_T)]$, where $\mathbb{Q} \sim
\mathbb{P}$ is a \textit{local martingale measure}, i.e., a
probability measure, which is equivalent to the objective
probability measure $\mathbb{P}$, such that the discounted stock
price process
\begin{align}\label{discounted-price}
\tilde{S}_t := e^{-(r-q)t} S_t = S_0 e^{X_t - (r-q)t}, \quad t \geq
0
\end{align}
is a local $\mathbb{Q}$-martingale, where $q$ denotes the dividend rate. Note that $(\tilde{S}_t)_{t \geq 0}$
also has the interpretation as the discounted value process of an investor who is endowed with one stock
and reinvests all dividend payments -- $q S_t dt$ per share -- in new shares of the same stock. Indeed, integration
by parts shows that
\begin{align*}
e^{-rt} S_t = \tilde{S}_t e^{-qt} &= S_0 + \int_0^t e^{-qs}d \tilde{S}_s - \int_0^t \tilde{S}_s q e^{-qs} ds
\\ &= S_0 + \int_0^t e^{-qs}d \tilde{S}_s - \int_0^t e^{-rs} q S_s ds , \quad t \geq 0
\end{align*}
and therefore the discounted price process $(\tilde{S}_t)_{t \geq 0}$ is a local $\mathbb{Q}$-martingale if and only if the process
\begin{align*}
e^{-rt} S_t + \int_0^t e^{-rs} q S_s ds, \quad t \geq 0
\end{align*}
is a local $\mathbb{Q}$-martingale.

Typically, exponential L\'evy
models (\ref{exp-Levy-model}) are free of arbitrage, but, unlike the
classical Black-Scholes model, not complete, that is, there exist
several martingale measures $\mathbb{Q} \sim \mathbb{P}$. In
particular, if not enough market data for calibration are available,
we are therefore faced with problem, which martingale measure we
should choose.

This text is devoted to the existence of suitable pricing measures
$\mathbb{Q} \sim \mathbb{P}$ for bilateral Gamma stock models. We
will in particular focus on the following two criteria:

\begin{itemize}
\item Under $\mathbb{Q}$, the process $X$ should again be a bilateral
Gamma process, because this allows,  by virtue of the simple
characteristic function (\ref{cf-bil-Gamma}) below, numerical
calculations of option prices by using the method of Fourier
transformation.

\item The measure $\mathbb{Q}$ should have a reasonable economic
interpretation.
\end{itemize}

The basic tool for all subsequent calculations in this paper is the cumulant
generating function (\ref{cumulant}) below of the bilateral Gamma
distribution. There exists a more general class of infinitely
divisible distributions, the so-called tempered stable distributions
(see \cite[Sec. 4.5]{Cont-Tankov}), which also include the CGMY distributions. They have a comparatively simple
cumulant generating function, too, and thus analogous results for essential parts of the
present paper can also be proven for them. However, their cumulant generating functions
considerably differ from that of bilateral Gamma distributions, in particular
they have finite values at the boundary of their domains, which is not true for
cumulant generating functions of bilateral Gamma distributions.

In order to avoid a cumbersome presentation of this article,
we shall provide the concrete results for tempered stable processes
in a subsequent paper, in which we also investigate further properties of this family of distributions.
Nevertheless, in this text, we shall always indicate the corresponding results for
tempered stable distributions -- without going into detail -- and compare it with the results derived for bilateral Gamma distributions.

One practical advantage of bilateral Gamma distributions is the explicit form of the density, which we do not
have for general tempered stable distributions. This allows to
apply maximum likelihood estimators in order to determine the
parameters from observations of a stock.

The remainder of this text is organized as follows. In Section \ref{sec-stoch-calculus} we
provide the required results from stochastic calculus. In Section
\ref{sec-bilateral} we review bilateral Gamma processes and
introduce bilateral Gamma stock models. Sections \ref{sec-Esscher}--\ref{sec-mmm} are devoted
to several approaches on choosing suitable martingale measures for
option pricing in bilateral Gamma stock models. We conclude with a
numerical illustration in Section \ref{sec-numerics}.

\section{Prerequisites from stochastic
calculus}\label{sec-stoch-calculus}

In this section, we collect the results from stochastic calculus,
which we will require in the sequel. Throughout this text, let
$(\Omega,\mathcal{F},(\mathcal{F}_t)_{t \geq 0},\mathbb{P})$ be a
filtered probability space satisfying the usual conditions.

\begin{lemma}
\cite[Thm. I.4.61]{JS} Let $X$ be a real-valued semimartingale.
There exists a unique (up to indistinguishability) solution $Z$ for
the equation
\begin{align}\label{eqn-stoch-exp}
Z_t = 1 + \int_0^t Z_{s-} dX_s, \quad t \geq 0.
\end{align}
\end{lemma}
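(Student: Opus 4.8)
The plan is to produce an explicit candidate for $Z$ — the Dol\'eans-Dade exponential — to verify that it solves \eqref{eqn-stoch-exp} by means of It\^o's formula, and then to establish uniqueness through a localization argument applied to the difference of two putative solutions.

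For existence, I would set
\[
Z_t = \exp\Bigl(X_t - X_0 - \tfrac{1}{2}\langle X^c, X^c\rangle_t\Bigr)\prod_{0 < s \leq t}(1+\Delta X_s)\,e^{-\Delta X_s},
\]
where $X^c$ denotes the continuous local martingale part of $X$. The first point to settle is that the infinite product converges and defines a c\`adl\`ag adapted process: since $X$ is a semimartingale one has $\sum_{0 < s \leq t}(\Delta X_s)^2 \leq [X,X]_t < \infty$ almost surely, and because $\log\bigl((1+x)e^{-x}\bigr) = O(x^2)$ as $x \to 0$, the product converges absolutely off a null set. Writing $Z = e^{U}V$ with $U$ the exponent and $V$ the jump product, I would apply the general It\^o formula for semimartingales to $f(U_t,V_t) = e^{U_t}V_t$; the continuous-martingale contribution of $U$ produces a drift $-\tfrac{1}{2}\langle X^c, X^c\rangle$ that is exactly cancelled by the quadratic-variation term coming from $\exp$, while the purely discontinuous contributions combine so that $\Delta Z_s = Z_{s-}\Delta X_s$ at each jump. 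Collecting terms yields $dZ_t = Z_{t-}\,dX_t$, which is \eqref{eqn-stoch-exp}.

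For uniqueness, suppose $Z$ and $Z'$ both solve \eqref{eqn-stoch-exp} and put $Y := Z - Z'$, so that $Y$ satisfies the homogeneous equation $Y_t = \int_0^t Y_{s-}\,dX_s$ with $Y_0 = 0$. I would localize by a sequence of stopping times, reducing to the case where $X = X_0 + M + A$ has a bounded martingale part $M$ with bounded quadratic variation and a finite-variation part $A$ of bounded total variation. An $L^2$ estimate then gives, via the Burkholder--Davis--Gundy inequality for the integral against $M$ together with a direct bound for the integral against $A$, an inequality of the form $\mathbb{E}\bigl[\sup_{s \leq t} Y_s^2\bigr] \leq C \int_0^t \mathbb{E}\bigl[\sup_{u \leq s} Y_u^2\bigr]\,ds$. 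Gronwall's lemma then forces $Y \equiv 0$ on each localizing interval, hence everywhere, which is the asserted uniqueness up to indistinguishability.

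The main obstacle I anticipate is the existence step, specifically the bookkeeping in the It\^o computation: one must treat the continuous and jump parts of $X$ simultaneously and check that the compensating factor $e^{-\Delta X_s}$ in the product is precisely what converts the multiplicative jump $1 + \Delta X_s$ into the additive increment $Z_{s-}\Delta X_s$ demanded by the integral equation. Establishing the convergence of the product and the integrability needed to justify the localization in the uniqueness step are technical but routine by comparison.
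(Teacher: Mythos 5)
The paper does not prove this lemma at all: it is quoted with the citation \cite[Thm. I.4.61]{JS}, and your argument is essentially the proof given in that reference --- the explicit Dol\'eans-Dade formula verified by It\^o's formula for existence, and localization plus a Gronwall-type iteration for uniqueness. Your sketch is sound in outline, but two steps need tightening. First, the a.s.\ convergence of the jump product requires separating out the (a.s.\ finitely many) jumps with $|\Delta X_s|>1/2$, to which your $\log\bigl((1+x)e^{-x}\bigr)=O(x^2)$ estimate does not apply and for which the factor $1+\Delta X_s$ may vanish or be negative; the estimate handles only the remaining small jumps. Second, the inequality $\mathbb{E}\bigl[\sup_{s\le t}Y_s^2\bigr]\le C\int_0^t\mathbb{E}\bigl[\sup_{u\le s}Y_u^2\bigr]\,ds$ does not follow from the Burkholder--Davis--Gundy inequality for a general semimartingale: BDG controls the martingale integral by $\mathbb{E}\bigl[\int_0^t Y_{u-}^2\,d[M,M]_u\bigr]$, and $d[M,M]_u$ need not be dominated by $C\,du$ (consider a martingale with a jump at a fixed time). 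The standard repair is to run the Gronwall argument with respect to the increasing process $B=[M,M]+\int_0^{\cdot}|dA_s|$, localized so that $B$ is bounded, obtaining $\mathbb{E}\bigl[\sup_{s\le t}Y_s^2\bigr]\le C\,\mathbb{E}\bigl[\int_0^t\sup_{u<s}Y_u^2\,dB_s\bigr]$ and iterating to the bound $C^n\mathbb{E}[B_t^n]/n!\to 0$. With these repairs your proof coincides with the cited one.
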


\begin{definition}
Let $X$ be a real-valued semimartingale. We call the unique solution
$Z$ for (\ref{eqn-stoch-exp}) the {\rm stochastic exponential} or
{\rm Dol\'eans-Dade exponential of $X$} and write $\mathcal{E}(X) :=
Z$.
\end{definition}

\begin{lemma}\label{lemma-stoch-log}
\cite[Lemma 2.2]{Kallsen-Shiryaev} Let $Z$ be a semimartingale such
that $Z,Z_-$ are $\mathbb{R} \setminus \{ 0 \}$-valued. There
exists a unique (up to indistinguishability) semimartingale $X$ such
that $X_0 = 0$ and $Z = Z_0 \mathcal{E}(X)$. It is given by
\begin{align}\label{eqn-stoch-log}
X_t = \int_0^t \frac{1}{Z_{s-}} dZ_s, \quad t \geq 0.
\end{align}
\end{lemma}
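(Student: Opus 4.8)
The plan is to verify the explicit formula (\ref{eqn-stoch-log}) directly and then deduce both existence and uniqueness from the uniqueness of solutions to the stochastic exponential equation (\ref{eqn-stoch-exp}). First I would check that the integrand $1/Z_-$ is admissible. Since $Z$ is a semimartingale its paths are c\`adl\`ag, and since neither $Z$ nor $Z_-$ vanishes, a compactness argument shows that $\inf_{s \le t} \abs{Z_s} > 0$ for every $t \geq 0$: any sequence of times approaching this infimum has a subsequence converging monotonically to some $s^\ast$, and the corresponding limit of $\abs{Z_{s_k}}$ equals either $\abs{Z_{s^\ast}}$ or $\abs{Z_{s^\ast-}}$, both of which are strictly positive. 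Hence the predictable process $1/Z_-$ is locally bounded, therefore integrable with respect to $Z$, and $X_t := \int_0^t Z_{s-}^{-1} \, dZ_s$ is a well-defined semimartingale with $X_0 = 0$. This step---ensuring the path stays bounded away from zero on compacts so that $1/Z_-$ is a legitimate integrand---is the one genuinely technical point; everything afterwards is formal.

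For existence, I would set $Y := Z / Z_0$ and show that $Y = \mathcal{E}(X)$. By associativity of the stochastic integral,
\begin{align*}
\int_0^t Y_{s-} \, dX_s = \int_0^t \frac{Z_{s-}}{Z_0} \cdot \frac{1}{Z_{s-}} \, dZ_s = \frac{1}{Z_0}(Z_t - Z_0) = Y_t - 1,
\end{align*}
so that $Y$ solves (\ref{eqn-stoch-exp}). The uniqueness assertion of the existence-and-uniqueness result for the stochastic exponential then forces $Y = \mathcal{E}(X)$, that is $Z = Z_0 \mathcal{E}(X)$, with $X$ given by (\ref{eqn-stoch-log}).

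For uniqueness of $X$, suppose $X'$ is any semimartingale with $X'_0 = 0$ and $Z = Z_0 \mathcal{E}(X')$. Then $\mathcal{E}(X') = Z/Z_0 = \mathcal{E}(X)$; writing $W := \mathcal{E}(X)$, observe that both $W$ and $W_-$ are $\mathbb{R} \setminus \{0\}$-valued precisely because $Z$ and $Z_-$ are. From the defining relation $dW = W_- \, dX = W_- \, dX'$ together with the invertibility of $W_-$ I then obtain $dX = W_-^{-1} \, dW = dX'$, which combined with $X_0 = X'_0 = 0$ gives $X = X'$ up to indistinguishability. The very same computation re-derives formula (\ref{eqn-stoch-log}), and this closes the argument.
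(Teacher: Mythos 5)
Your proposal is correct. Note that the paper itself offers no proof of this lemma at all: it is stated as a quoted result with the citation \cite[Lemma 2.2]{Kallsen-Shiryaev}, so there is nothing internal to compare against. What you have written is essentially the standard argument from that reference, reproduced in full and correctly. The one genuinely technical point --- that $1/Z_-$ is a legitimate integrand --- is handled properly: your compactness argument shows $\inf_{s\le t}\abs{Z_s}>0$ pathwise for c\`adl\`ag paths avoiding zero together with their left limits, whence $1/Z_-$ is an adapted left-continuous, locally bounded, hence predictable and $Z$-integrable process. The remaining steps (associativity of the stochastic integral to show $Z/Z_0$ solves the exponential equation driven by $X$, uniqueness of the stochastic exponential to get existence, and inversion of $W_-$ to get uniqueness of $X$) are all formal and correct. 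The only value the citation buys over your argument is brevity; your version buys self-containedness at essentially no cost.
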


\begin{definition}
Let $Z$ be a semimartingale such that $Z,Z_-$ are $\mathbb{R}
\setminus \{ 0 \}$-valued. We call the unique process $X$ from Lemma
\ref{lemma-stoch-log} the {\rm stochastic logarithm of $Z$} and
write $\mathcal{L}(Z) := X$.
\end{definition}

\begin{lemma}\label{lemma-Levy-mt}
Let $X$ be a real-valued L\'evy process. The following statements
are equivalent:
\begin{enumerate}
\item $X$ is a local martingale;

\item $X$ is a martingale;

\item $\mathbb{E}[X_1] = 0$.
\end{enumerate}
\end{lemma}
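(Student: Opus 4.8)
The plan is to prove the cycle of implications $(3)\Rightarrow(2)\Rightarrow(1)\Rightarrow(3)$. The implication $(2)\Rightarrow(1)$ is immediate, since every martingale is a local martingale.

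For $(3)\Rightarrow(2)$, I first note that $\mathbb{E}[X_1]=0$ presupposes $X_1\in L^1$, and that integrability holds at every time: for $0<t\le 1$ the variables $X_t$ and $X_1-X_t$ are independent with integrable sum $X_1$, and independent summands of an integrable random variable are themselves integrable, so $X_t\in L^1$; the case $t>1$ follows from $X_t = X_n-(X_n-X_t)$ with $n=\lceil t\rceil$. The additive decomposition $X_{s+t}=X_s+(X_{s+t}-X_s)$, together with stationarity and independence of the increments, shows that $\varphi(t):=\mathbb{E}[X_t]$ satisfies Cauchy's functional equation $\varphi(s+t)=\varphi(s)+\varphi(t)$; being right-continuous, $\varphi$ is linear, whence $\varphi(t)=t\,\mathbb{E}[X_1]=0$. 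The martingale property is then the conditional computation
\[
\mathbb{E}[X_t\mid\mathcal F_s]=X_s+\mathbb{E}[X_t-X_s\mid\mathcal F_s]=X_s+\mathbb{E}[X_{t-s}]=X_s,\qquad 0\le s\le t,
\]
where I use independence of the increments in the second step.

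The substantial direction is $(1)\Rightarrow(3)$, and the obstacle is that a local martingale need not be integrable, so I must exploit the L\'evy structure to recover integrability of $X_1$. I would use the L\'evy--It\^o decomposition $X=Y+C$ into the part $C_t=\sum_{s\le t}\Delta X_s\mathbf 1_{\{|\Delta X_s|>1\}}$ collecting the large jumps, a compound Poisson process of finite variation, and the remainder $Y=X-C$, a L\'evy process with jumps bounded by $1$. The process $Y$ has moments of every order and in particular is integrable, so by the argument of the previous paragraph $Y_t-t\,\mathbb{E}[Y_1]$ is a martingale. Consequently $C+t\,\mathbb{E}[Y_1]=X-(Y-t\,\mathbb{E}[Y_1])$ is the difference of a local martingale and a martingale, hence itself a local martingale.

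Now $C$ is a finite-variation compound Poisson process, and a finite-variation process can be a local martingale only if the total variation of its large jumps is locally integrable; for $C$ this amounts to $\int_{\{|x|>1\}}|x|\,\nu(dx)<\infty$, where $\nu$ denotes the L\'evy measure. This integrability of the large jumps is exactly the criterion for $X_1\in L^1$ (see \cite[Sec.~3.3]{Cont-Tankov}), and it is the main technical point of the argument. Once $X_1\in L^1$ is established, the $(3)\Rightarrow(2)$ reasoning yields that $X_t-t\,\mathbb{E}[X_1]$ is a martingale; subtracting this from the local martingale $X$ shows that the deterministic, continuous, finite-variation process $t\mapsto t\,\mathbb{E}[X_1]$ is a local martingale and must therefore vanish, giving $\mathbb{E}[X_1]=0$. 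This closes the cycle.
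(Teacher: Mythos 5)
Your argument is correct, but it is a genuinely different route from the paper's, which is a two-line citation: the equivalence of (1) and (2) is delegated to Sidib\'e's result on local martingales with independent increments, and (2) $\Leftrightarrow$ (3) to the identity $\mathbb{E}[X_t]=t\,\mathbb{E}[X_1]$ together with \cite[Prop.~3.17]{Cont-Tankov}. You instead give a self-contained proof organised as the cycle $(3)\Rightarrow(2)\Rightarrow(1)\Rightarrow(3)$, in which the hard content of Sidib\'e's theorem is reproved by hand: you split $X=Y+C$ via the L\'evy--It\^o decomposition, subtract the martingale $Y_t-t\,\mathbb{E}[Y_1]$, and argue that the remaining finite-variation local martingale $C_t+t\,\mathbb{E}[Y_1]$ forces $\int_{\{|x|>1\}}|x|\,\nu(dx)<\infty$, hence $X_1\in L^1$, after which the deterministic drift $t\,\mathbb{E}[X_1]$ is a continuous finite-variation local martingale and must vanish. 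This is sound, and it buys transparency (one sees exactly where integrability of the large jumps enters) at the cost of length. The one step you should pin down rather than assert is the claim that a finite-variation local martingale has locally integrable large-jump variation: this is precisely where the substance of Sidib\'e's theorem lives. It follows from the fact that every local martingale is a special semimartingale together with \cite[Prop.~I.4.23]{JS}, which says that for a special semimartingale the variation of $\sum_{s\le\cdot}\Delta X_s\mathbbm{1}_{\{|\Delta X_s|>1\}}$ is locally integrable; applied to the compound Poisson part, whose compensated large-jump variation is $t\int_{\{|x|>1\}}|x|\,\nu(dx)$, this yields the finiteness you need. Two further minor points: the linearity of $\varphi(t)=\mathbb{E}[X_t]$ is most cleanly obtained from additivity on the rationals plus measurability (or local boundedness) of $\varphi$, rather than an unproved right-continuity claim; and the conditional computation in $(3)\Rightarrow(2)$ implicitly uses that increments are independent of the full filtration $\mathcal{F}_s$, which is part of the definition of a L\'evy process relative to $(\mathcal{F}_t)_{t\ge 0}$ and worth saying.
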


\begin{proof}
For (1) $\Leftrightarrow$ (2) see \cite{Sidibe}. Noting that
$\mathbb{E}[X_t] = t \mathbb{E}[X_1]$ for all $t \geq 0$, the
equivalence (2) $\Leftrightarrow$ (3) follows from \cite[Prop.
3.17]{Cont-Tankov}.
\end{proof}

\begin{lemma}\label{lemma-exp-Levy-mt}
Let $X$ be a real-valued L\'evy process. The following statements
are equivalent:
\begin{enumerate}
\item $e^X$ is a local martingale;

\item $e^X$ is a martingale;

\item $\mathbb{E}[e^{X_1}] = 1$.
\end{enumerate}
\end{lemma}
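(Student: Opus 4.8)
The plan is to establish the cycle of implications (2) $\Rightarrow$ (1) $\Rightarrow$ (3) $\Rightarrow$ (2). The implication (2) $\Rightarrow$ (1) is immediate, since every martingale is a local martingale, so the real work lies in (1) $\Rightarrow$ (3) and (3) $\Rightarrow$ (2). Throughout I would exploit the two structural features of a L\'evy process $X$: the stationary, independent increments, and the resulting multiplicative form of the moment generating function, namely that $\mathbb{E}[e^{X_t}] = (\mathbb{E}[e^{X_1}])^t$ whenever $\mathbb{E}[e^{X_1}] < \infty$. This is the exponential Laplace-transform analogue of the identity $\mathbb{E}[X_t] = t \mathbb{E}[X_1]$ already used in the proof of Lemma \ref{lemma-Levy-mt}.

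For (3) $\Rightarrow$ (2) I would argue directly. Setting $c := \mathbb{E}[e^{X_1}] = 1$, the increment formula gives $\mathbb{E}[e^{X_t}] = c^t = 1 < \infty$ for every $t \geq 0$, so $e^{X_t}$ is integrable. For $0 \leq s \leq t$ I would then split $e^{X_t} = e^{X_s} e^{X_t - X_s}$ and use that $X_t - X_s$ is independent of $\mathcal{F}_s$ with the same law as $X_{t-s}$, obtaining
\begin{align*}
\mathbb{E}[e^{X_t} \mid \mathcal{F}_s] = e^{X_s}\, \mathbb{E}[e^{X_t - X_s}] = e^{X_s}\, \mathbb{E}[e^{X_{t-s}}] = e^{X_s},
\end{align*}
which is precisely the martingale property.

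The implication (1) $\Rightarrow$ (3) is the main obstacle, because a priori a local martingale need not be a true martingale, and one must rule out that $e^X$ is a strict local martingale. My first step would be to observe that $e^X$ is nonnegative, so as a nonnegative local martingale it is a supermartingale (by Fatou's lemma); in particular $c := \mathbb{E}[e^{X_1}] \leq \mathbb{E}[e^{X_0}] = 1 < \infty$, and $c > 0$ since $e^{X_1} > 0$ almost surely. Finiteness of $c$ then licenses the increment formula $\mathbb{E}[e^{X_t}] = c^t$, so that $M_t := e^{X_t}/c^t$ is a true martingale by the computation of the previous paragraph. The decisive step is to feed the representation $e^{X_t} = c^t M_t$ back into the hypothesis: integration by parts (noting that $t \mapsto c^t$ is continuous and of finite variation, hence contributes no bracket term) yields
\begin{align*}
e^{X_t} = c^t M_t = 1 + \int_0^t c^s \, dM_s + (\log c) \int_0^t c^s M_{s-} \, ds .
\end{align*}
The stochastic integral against the martingale $M$ is a local martingale, so if $e^X$ is a local martingale, then the continuous finite-variation term $(\log c)\int_0^t c^s M_{s-}\,ds$ is a continuous local martingale of finite variation starting at $0$, and hence vanishes identically. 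Since $c^s M_{s-} > 0$, this forces $\log c = 0$, i.e.\ $c = 1$, which is exactly (3). I expect the only delicate points to be the justification that the nonnegative local martingale is a supermartingale and the appeal to uniqueness of the canonical semimartingale decomposition; both are standard once the finiteness $c < \infty$ has been secured.
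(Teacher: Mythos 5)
Your proof is correct, but it takes a genuinely different route from the paper's. The paper disposes of the lemma in two lines by citation: the equivalence of (1) and (2) is delegated to Kallsen's Lemma 4.4.3 on exponentials of L\'evy processes, and the equivalence of (2) and (3) to the identity $\mathbb{E}[e^{X_t}] = (\mathbb{E}[e^{X_1}])^t$ together with Cont--Tankov, Prop.\ 3.17. You instead give a self-contained argument, and the interesting part is your treatment of (1) $\Rightarrow$ (3), which is exactly the content hidden inside the Kallsen citation: you first extract integrability from the fact that a nonnegative local martingale is a supermartingale (so $c = \mathbb{E}[e^{X_1}] \in (0,1]$), normalize to the true martingale $M_t = e^{X_t}/c^t$, and then use integration by parts against the deterministic finite-variation factor $c^t$ to exhibit $e^{X_t} - 1 - \int_0^t c^s\,dM_s = (\log c)\int_0^t e^{X_{s-}}\,ds$ as a continuous finite-variation local martingale null at $0$, forcing $\log c = 0$ since the integrand is strictly positive. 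This is a clean and correct way to rule out strict local martingality, and it buys transparency: the reader sees precisely why the L\'evy structure prevents $e^X$ from being a strict local martingale, rather than having to trust a reference. What the paper's approach buys is brevity and the reuse of a standard result in the literature. The only point you share with the paper and gloss over equally is the justification of $\mathbb{E}[e^{X_t}] = c^t$ for non-integer $t$ (finiteness propagates to all $t$ by independence of increments, and the multiplicative functional equation plus measurability gives the exponential form); this is standard and not a gap.
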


\begin{proof}
The equivalence (1) $\Leftrightarrow$ (2) follows from \cite[Lemma
4.4.3]{Kallsen}. Noting that $\mathbb{E}[e^{X_t}] =
(\mathbb{E}[e^{X_1}])^t$ for all $t \geq 0$, the equivalence (2)
$\Leftrightarrow$ (3) follows from \cite[Prop. 3.17]{Cont-Tankov}.
\end{proof}

\section{Stock price models driven by bilateral Gamma
processes}\label{sec-bilateral}

In this section, we shall introduce bilateral Gamma stock models.
For this purpose, we first review the family of bilateral Gamma
processes. For details and more information, we refer to
\cite{Kuechler-Tappe}.

A \textit{bilateral Gamma distribution} with parameters $\alpha^+,
\lambda^+ ,\alpha^-, \lambda^- > 0$ is defined as the distribution
of $Y-Z$, where $Y$ and $Z$ are independent, $Y \sim
\Gamma(\alpha^+,\lambda^+)$ and $Z \sim \Gamma(\alpha^-,\lambda^-)$. For $\alpha,\lambda > 0$ we denote by
$\Gamma(\alpha,\lambda)$ a Gamma distribution, i.e. the absolutely continuous probability distribution with density
\begin{align*}
f(x) = \frac{\lambda^{\alpha}}{\Gamma(\alpha)} x^{\alpha-1} e^{-\lambda x} \mathbbm{1}_{(0,\infty)}(x), \quad x \in \mathbb{R}.
\end{align*}
The characteristic function of a bilateral Gamma distribution is
given by
\begin{align}\label{cf-bil-Gamma}
\phi(z) = \left( \frac{\lambda^+}{\lambda^+ - iz} \right)^{\alpha^+}
\left( \frac{\lambda^-}{\lambda^- + iz} \right)^{\alpha^-}, \quad z
\in \mathbb{R}
\end{align}
where the powers stem from the main branch of the complex logarithm.

Thus, any bilateral Gamma distribution is infinitely divisible,
which allows us to define its associated L\'evy process $(X_t)_{t
\geq 0}$, which we call a \textit{bilateral Gamma process}. We write
$X = (X_t)_{t \geq 0} \sim \Gamma(\alpha^+,\lambda^+;\alpha^-,\lambda^-)$ if $X_1$ has
a bilateral Gamma distribution with parameters $\alpha^+, \lambda^+
,\alpha^-, \lambda^- > 0$. All increments of $X$ have a bilateral
Gamma distribution, more precisely
\begin{align}\label{distribution-bil-time}
X_t - X_s \sim \Gamma(\alpha^+ (t-s), \lambda^+; \alpha^- (t-s),
\lambda^-) \quad \text{for $0 \leq s < t$.}
\end{align}
The characteristic triplet with respect to the truncation function
$h \equiv 0$ is given by $(0,0,F)$, where $F$ denotes the L\'evy
measure
\begin{align}\label{levy-measure}
F(dx) = \left( \frac{\alpha^+}{x} e^{-\lambda^+ x}
\mathbbm{1}_{(0,\infty)}(x) + \frac{\alpha^-}{|x|} e^{-\lambda^-
|x|} \mathbbm{1}_{(-\infty,0)}(x) \right)dx.
\end{align}
The \textit{cumulant generating function} $\Psi(z) = \ln \mathbb{E}
[ e^{zX_1} ]$ exists on $(-\lambda^-,\lambda^+)$ and is given by
\begin{align}\label{cumulant}
\Psi(z) &= \alpha^+ \ln \left( \frac{\lambda^+}{\lambda^+ - z}
\right) + \alpha^- \ln \left( \frac{\lambda^-}{\lambda^- + z}
\right), \quad z \in (-\lambda^-, \lambda^+).
\end{align}
We can write $X = X^+ - X^-$ as the difference of two independent
standard Gamma processes, where $X^+ \sim
\Gamma(\alpha^+,\lambda^+)$ and $X^- \sim
\Gamma(\alpha^-,\lambda^-)$. The corresponding cumulant generating
functions $\Psi^+(z) = \ln \mathbb{E} [ e^{zX_1^+} ]$ and $\Psi^-(z)
= \ln \mathbb{E} [ e^{zX_1^-} ]$ are given by
\begin{align}\label{cumulant-plus}
\Psi^+(z) &= \alpha^+ \ln \left( \frac{\lambda^+}{\lambda^+ - z}
\right), \quad z \in (-\infty,\lambda^+)
\\ \label{cumulant-minus} \Psi^-(z) &= \alpha^- \ln \left( \frac{\lambda^-}{\lambda^- - z} \right), \quad
z \in (-\infty, \lambda^-).
\end{align}
Note that $\Psi(z) = \Psi^+(z) + \Psi^-(-z)$ for $z \in (-\lambda^-,
\lambda^+)$.

A \textit{bilateral Gamma stock model} is an exponential L\'evy
model of the type (\ref{exp-Levy-model}) with $X$ being a bilateral
Gamma process. In what follows, we assume that $r \geq q \geq 0$,
that is, the dividend rate $q$ of the stock cannot exceed the
interest rate $r$ of the bank account and none of them is negative.

\begin{lemma}\label{lemma-martingale-measure}
Let $X \sim \Gamma(\alpha^+,\lambda^+;\alpha^-,\lambda^-)$ under
$\mathbb{P}$.

\begin{enumerate}
\item If $\lambda^+ > 1$, then $\mathbb{P}$ is a martingale
measure for $\tilde{S}$ if and only if
\begin{align}\label{martingale-eqn}
\left( \frac{\lambda^+}{\lambda^+ - 1} \right)^{\alpha^+} \left(
\frac{\lambda^-}{\lambda^- + 1} \right)^{\alpha^-} = e^{r-q}.
\end{align}

\item If $\lambda^+ \leq 1$, then $\mathbb{P}$ is never a martingale measure for $\tilde{S}$.

\end{enumerate}
\end{lemma}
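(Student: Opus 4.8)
The plan is to reduce the assertion to the exponential L\'evy martingale criterion of Lemma \ref{lemma-exp-Levy-mt}. First I would note that $Y_t := X_t - (r-q)t$ is again a L\'evy process, so that $\tilde{S}_t = S_0 e^{Y_t}$ is a positive exponential L\'evy process. Since $S_0 > 0$ is constant, $\tilde{S}$ is a local martingale (equivalently, by Lemma \ref{lemma-exp-Levy-mt}, a martingale) if and only if $e^Y$ is, and the latter holds precisely when $\mathbb{E}[e^{Y_1}] = 1$. Because $\mathbb{E}[e^{Y_1}] = e^{-(r-q)} \mathbb{E}[e^{X_1}]$, the whole statement comes down to deciding when $\mathbb{E}[e^{X_1}] = e^{r-q}$, and the threshold $\lambda^+ > 1$ versus $\lambda^+ \leq 1$ will turn out to be exactly the condition governing the finiteness of $\mathbb{E}[e^{X_1}]$.

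For part (1), suppose $\lambda^+ > 1$. Then $-\lambda^- < 0 < 1 < \lambda^+$, so $z = 1$ lies in the open domain $(-\lambda^-, \lambda^+)$ of the cumulant generating function, and hence $\mathbb{E}[e^{X_1}] = e^{\Psi(1)}$ is finite. Evaluating (\ref{cumulant}) at $z = 1$ gives
\[
\mathbb{E}[e^{X_1}] = e^{\Psi(1)} = \left( \frac{\lambda^+}{\lambda^+ - 1} \right)^{\alpha^+} \left( \frac{\lambda^-}{\lambda^- + 1} \right)^{\alpha^-},
\]
so the martingale condition $\mathbb{E}[e^{X_1}] = e^{r-q}$ is literally the equation (\ref{martingale-eqn}).

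For part (2), suppose $\lambda^+ \leq 1$. The hard part, and the only genuine obstacle, is to show that $\mathbb{E}[e^{X_1}] = +\infty$; once this is established the condition $\mathbb{E}[e^{X_1}] = e^{r-q}$ cannot hold, since its right-hand side is finite. Here one cannot simply substitute $z = 1$ into (\ref{cumulant}), which is valid only on the open interval $(-\lambda^-, \lambda^+)$ that no longer contains $1$. Instead I would use the decomposition $X_1 = X_1^+ - X_1^-$ with independent $X_1^+ \sim \Gamma(\alpha^+, \lambda^+)$ and $X_1^- \sim \Gamma(\alpha^-, \lambda^-)$, giving $\mathbb{E}[e^{X_1}] = \mathbb{E}[e^{X_1^+}] \, \mathbb{E}[e^{-X_1^-}]$. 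The second factor is finite (indeed bounded by $1$, since $X_1^- \geq 0$ implies $e^{-X_1^-} \leq 1$), whereas for the first the Gamma density yields
\[
\mathbb{E}[e^{X_1^+}] = \frac{(\lambda^+)^{\alpha^+}}{\Gamma(\alpha^+)} \int_0^\infty x^{\alpha^+ - 1} e^{-(\lambda^+ - 1) x} \, dx = +\infty,
\]
because the exponent $\lambda^+ - 1 \leq 0$ removes the integrable decay at infinity. This is exactly the blow-up of the moment generating function at the right boundary of its domain, which is what forces the dichotomy at $\lambda^+ = 1$.
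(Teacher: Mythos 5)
Your proof is correct and follows essentially the same route as the paper's: reduce via Lemma \ref{lemma-exp-Levy-mt} to the condition $\mathbb{E}[e^{X_1-(r-q)}]=1$ and then decide finiteness of $\mathbb{E}[e^{X_1}]$ according to whether $\lambda^+>1$. The only difference is that you spell out the two facts the paper simply asserts --- evaluating $\Psi$ at $z=1$ for part (1) and the divergence of the Gamma integral for part (2) --- which is a welcome but not substantively different elaboration.
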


\begin{proof}
If $\lambda^+ > 1$, we have $\mathbb{E}[e^{X_1}] < \infty$. By Lemma
\ref{lemma-exp-Levy-mt} the discounted price process $\tilde{S}$ in
(\ref{discounted-price}) is a local martingale if and only if
$\mathbb{E}[e^{X_1 - (r-q)}] = 1$, which is the case if and only if
(\ref{martingale-eqn}) holds.

In the case $\lambda^+ \leq 1$ we have $\mathbb{E}[e^{X_1}] =
\infty$. Lemma \ref{lemma-exp-Levy-mt} implies that $\tilde{S}$
cannot be a local martingale.
\end{proof}

\section{Existence of Esscher martingale measures in bilateral Gamma stock
models}\label{sec-Esscher}

For option pricing in bilateral Gamma stock models of the type
(\ref{exp-Levy-model}) we have to find a martingale measure. One
method is to use the so-called \textit{Esscher transform}, which was
pioneered in \cite{Gerber}. We recall the definition in the context
of bilateral Gamma processes.

\begin{definition}
Let $X \sim \Gamma(\alpha^+,\lambda^+;\alpha^-,\lambda^-)$ under
$\mathbb{P}$ and let $\Theta \in (-\lambda^-,\lambda^+)$ be
arbitrary. The {\rm Esscher transform} $\mathbb{P}^{\Theta} \overset{\rm loc}{\sim} \mathbb{P}$ is
defined as the locally equivalent probability measure with
likelihood process
\begin{align*}
\Lambda_t(\mathbb{P}^{\Theta},\mathbb{P}) := \frac{d
\mathbb{P}^{\Theta}}{d \mathbb{P}}\bigg|_{\mathcal{F}_t} = e^{\Theta
X_t - \Psi(\Theta) t}, \quad t \geq 0
\end{align*}
where $\Psi$ denotes the cumulant generating function given by
(\ref{cumulant}).
\end{definition}

\begin{lemma}\label{lemma-bilateral-preserving}
Let $X \sim \Gamma(\alpha^+,\lambda^+;\alpha^-,\lambda^-)$ under
$\mathbb{P}$ and let $\Theta \in (-\lambda^-,\lambda^+)$ be
arbitrary. Then we have $X \sim \Gamma(\alpha^+,\lambda^+ -
\Theta;\alpha^-,\lambda^- + \Theta)$ under $\mathbb{P}^{\Theta}$.
\end{lemma}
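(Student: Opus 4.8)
The plan is to compute the characteristic function of $X_t$ under the Esscher-transformed measure $\mathbb{P}^\Theta$ and show it coincides with the characteristic function of a bilateral Gamma process with the shifted parameters $(\alpha^+,\lambda^+-\Theta;\alpha^-,\lambda^-+\Theta)$. Since a bilateral Gamma law is determined by its parameters through the characteristic function \eqref{cf-bil-Gamma}, and since the whole process is a Lévy process characterized by its one-dimensional marginals via \eqref{distribution-bil-time}, it suffices to identify the law at a single time, most conveniently $t=1$.

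Let me think about the mechanics. Under $\mathbb{P}$ the distribution of $X_1$ has density proportional to the bilateral Gamma density, and the Esscher transform tilts this density by the factor $e^{\Theta x}/\mathbb{E}[e^{\Theta X_1}] = e^{\Theta x - \Psi(\Theta)}$, which is exactly the time-$1$ value of the likelihood process $\Lambda_t(\mathbb{P}^\Theta,\mathbb{P})$. Concretely, the approach is to write, for $z\in\mathbb{R}$,
\begin{align*}
\mathbb{E}_{\mathbb{P}^\Theta}[e^{izX_1}] = \mathbb{E}_{\mathbb{P}}\big[e^{izX_1}\,e^{\Theta X_1 - \Psi(\Theta)}\big] = e^{-\Psi(\Theta)}\,\mathbb{E}_{\mathbb{P}}\big[e^{(\Theta+iz)X_1}\big].
\end{align*}
The inner expectation is an evaluation of the moment generating function at the complex argument $\Theta+iz$; since $\Theta\in(-\lambda^-,\lambda^+)$ lies in the interior of the domain where $\Psi$ is finite, this analytic continuation is legitimate and equals $e^{\Psi(\Theta+iz)}$. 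Thus the transformed characteristic function is $e^{\Psi(\Theta+iz)-\Psi(\Theta)}$.

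The remaining step is purely algebraic: substitute the explicit cumulant generating function \eqref{cumulant} and simplify $\Psi(\Theta+iz)-\Psi(\Theta)$. For the positive part the logarithm term becomes
\begin{align*}
\alpha^+\ln\!\left(\frac{\lambda^+}{\lambda^+-\Theta-iz}\right) - \alpha^+\ln\!\left(\frac{\lambda^+}{\lambda^+-\Theta}\right) = \alpha^+\ln\!\left(\frac{\lambda^+-\Theta}{(\lambda^+-\Theta)-iz}\right),
\end{align*}
which is exactly the first factor in \eqref{cf-bil-Gamma} with $\lambda^+$ replaced by $\lambda^+-\Theta$; an analogous computation for the negative part produces the factor with $\lambda^-$ replaced by $\lambda^-+\Theta$. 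Matching with \eqref{cf-bil-Gamma} then identifies the law of $X_1$ under $\mathbb{P}^\Theta$ as $\Gamma(\alpha^+,\lambda^+-\Theta;\alpha^-,\lambda^-+\Theta)$, and \eqref{distribution-bil-time} extends this to all $t$.

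**The main obstacle** I expect is justifying the analytic step cleanly rather than the algebra: one must argue that $\mathbb{E}_{\mathbb{P}}[e^{(\Theta+iz)X_1}]=e^{\Psi(\Theta+iz)}$, i.e. that the moment generating function admits the expected analytic continuation along the line $\Re(\cdot)=\Theta$. This follows because $\Theta$ lies strictly inside the convergence strip $(-\lambda^-,\lambda^+)$, so $\mathbb{E}_{\mathbb{P}}[e^{\Theta X_1}]<\infty$ and the function $w\mapsto\mathbb{E}_{\mathbb{P}}[e^{wX_1}]$ is holomorphic on the open vertical strip $\{w:\Re(w)\in(-\lambda^-,\lambda^+)\}$, agreeing with $e^{\Psi(w)}$ there by the identity theorem; the only subtlety is keeping track of the principal branch of the complex logarithm so that the expression matches the stated form of \eqref{cf-bil-Gamma}. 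Everything else is a direct substitution.
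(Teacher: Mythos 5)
Your argument is correct, but it takes a different route from the paper: the paper disposes of this lemma with a one-line citation to Proposition 2.1.3 and Example 2.1.4 of K\"uchler--S\o{}rensen's book on exponential families of stochastic processes, which treats Esscher-type measure changes abstractly, whereas you give a self-contained computation of the characteristic function under $\mathbb{P}^{\Theta}$ via analytic continuation of the moment generating function into the strip $\{\Re(w)\in(-\lambda^-,\lambda^+)\}$. Your computation of $\mathbb{E}_{\mathbb{P}^{\Theta}}[e^{izX_1}]=e^{\Psi(\Theta+iz)-\Psi(\Theta)}$ and the subsequent algebra matching \eqref{cf-bil-Gamma} with the shifted parameters are sound. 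The one point you should tighten is the final step: identifying the law of $X_1$ does not by itself show that $X$ is a bilateral Gamma \emph{process} under $\mathbb{P}^{\Theta}$; for that you need that $X$ remains a L\'evy process under the new measure. Your appeal to \eqref{distribution-bil-time} is circular as stated, since that relation describes increments of a process already known to be bilateral Gamma. The fix is routine --- repeat your characteristic-function computation conditionally on $\mathcal{F}_s$ using the density $e^{\Theta X_t-\Psi(\Theta)t}$ to see that increments are independent and stationary with the claimed law --- and this is precisely what the cited exponential-families machinery delivers for free. On balance, your approach buys transparency and self-containedness at the cost of having to verify the preservation of the L\'evy property by hand.
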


\begin{proof}
This follows from Proposition 2.1.3 and Example 2.1.4 in \cite{Kuechler-Soerensen}.
\end{proof}

The upcoming result characterizes all bilateral Gamma stock models,
for which martingale measures given by an Esscher transform exist.

\begin{theorem}\label{thm-Esscher}
Let $X \sim \Gamma(\alpha^+,\lambda^+;\alpha^-,\lambda^-)$ under
$\mathbb{P}$. Then there exists $\Theta \in (-\lambda^-,\lambda^+)$
such that $\mathbb{P}^{\Theta}$ is a martingale measure if and only
if
\begin{align}\label{cond-for-Esscher-lambda}
\lambda^+ + \lambda^- > 1.
\end{align}
If (\ref{cond-for-Esscher-lambda}) is satisfied, $\Theta$ is unique,
belongs to the interval $(-\lambda^-, \lambda^+ - 1)$, and it is the
unique solution of the equation
\begin{align}\label{Esscher-equation}
\left( \frac{\lambda^+ - \Theta}{\lambda^+ - \Theta - 1}
\right)^{\alpha^+} \left( \frac{\lambda^- + \Theta}{\lambda^- +
\Theta + 1} \right)^{\alpha^-} = e^{r-q}, \quad \Theta \in
(-\lambda^-, \lambda^+ - 1).
\end{align}
Moreover, we have $X \sim \Gamma(\alpha^+,\lambda^+ -
\Theta;\alpha^-,\lambda^- + \Theta)$ under $\mathbb{P}^{\Theta}$.
\end{theorem}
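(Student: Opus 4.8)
The plan is to reduce everything to the two preceding lemmas. First I would invoke Lemma~\ref{lemma-bilateral-preserving}: under the Esscher measure $\mathbb{P}^{\Theta}$ the process $X$ is again bilateral Gamma, now with parameters $(\alpha^+,\lambda^+-\Theta;\alpha^-,\lambda^-+\Theta)$. Applying Lemma~\ref{lemma-martingale-measure} to $\mathbb{P}^{\Theta}$ in place of $\mathbb{P}$ --- with $\lambda^+$ replaced by $\lambda^+-\Theta$ and $\lambda^-$ by $\lambda^-+\Theta$ --- then shows that $\mathbb{P}^{\Theta}$ is a martingale measure if and only if $\lambda^+-\Theta>1$ and the transformed martingale equation, which is precisely (\ref{Esscher-equation}), holds. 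The condition $\lambda^+-\Theta>1$ reads $\Theta<\lambda^+-1$, so together with the admissibility constraint $\Theta\in(-\lambda^-,\lambda^+)$ any candidate $\Theta$ must lie in the interval $(-\lambda^-,\lambda^+-1)$.

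This already delivers the equivalence and the ``only if'' direction. The interval $(-\lambda^-,\lambda^+-1)$ is nonempty if and only if $-\lambda^-<\lambda^+-1$, i.e.\ if and only if (\ref{cond-for-Esscher-lambda}) holds. Hence, when $\lambda^++\lambda^-\le 1$, no admissible $\Theta$ can make $\mathbb{P}^{\Theta}$ a martingale measure, because part~(2) of Lemma~\ref{lemma-martingale-measure} forbids it as soon as $\lambda^+-\Theta\le 1$.

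For the converse, assume (\ref{cond-for-Esscher-lambda}) and study the left-hand side of (\ref{Esscher-equation}) as a function $g$ of $\Theta$ on $(-\lambda^-,\lambda^+-1)$. I would examine the two boundary limits: as $\Theta\uparrow\lambda^+-1$ the first factor blows up, since its denominator $\lambda^+-\Theta-1\to 0^+$ while the second factor stays finite and positive (here one uses $\lambda^-+\lambda^+-1>0$), so $g(\Theta)\to+\infty$; as $\Theta\downarrow-\lambda^-$ the numerator $\lambda^-+\Theta$ of the second factor tends to $0^+$ while the first factor stays finite, so $g(\Theta)\to 0$. Since $g$ is continuous and $e^{r-q}>0$, the intermediate value theorem yields a solution $\Theta\in(-\lambda^-,\lambda^+-1)$. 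For uniqueness I would show $g$ is strictly monotone by differentiating $\ln g$, obtaining
\begin{align*}
(\ln g)'(\Theta) = \frac{\alpha^+}{(\lambda^+-\Theta)(\lambda^+-\Theta-1)} + \frac{\alpha^-}{(\lambda^-+\Theta)(\lambda^-+\Theta+1)},
\end{align*}
which is strictly positive throughout the interval because both products in the denominators are positive there. Thus $g$ is strictly increasing and the solution is unique. Finally, the ``moreover'' claim is just a restatement of Lemma~\ref{lemma-bilateral-preserving} for this particular $\Theta$.

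The computations are all elementary; the only genuinely substantive step is the boundary analysis of $g$, which is exactly what forces $\lambda^++\lambda^->1$ to be the decisive condition. The constraint $\lambda^+-\Theta>1$ inherited from Lemma~\ref{lemma-martingale-measure} is what makes the admissible interval shrink and eventually collapse, so the whole difficulty is bookkeeping of the endpoints rather than any deep analytic input.
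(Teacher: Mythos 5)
Your proposal is correct and follows essentially the same route as the paper: reduce to Lemma~\ref{lemma-bilateral-preserving} and Lemma~\ref{lemma-martingale-measure} to get the constraint $\Theta\in(-\lambda^-,\lambda^+-1)$ together with equation~(\ref{Esscher-equation}), then establish existence and uniqueness of the solution by monotonicity plus boundary limits. The only cosmetic difference is that the paper works with $f=\ln g$ (with limits $-\infty$ and $+\infty$) while you work with $g$ itself (with limits $0$ and $+\infty$); the monotonicity computation is identical.
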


\begin{proof}
Let $\Theta \in (-\lambda^-,\lambda^+)$ be arbitrary. In view of
Lemma \ref{lemma-bilateral-preserving} and Lemma
\ref{lemma-martingale-measure}, the probability measure
$\mathbb{P}^{\Theta}$ is a martingale measure if and only if
$\lambda^+ - \Theta > 1$, i.e. $\Theta \in (-\lambda^-,\lambda^+ -
1)$, and (\ref{Esscher-equation}) is fulfilled. Note that
$(-\lambda^-,\lambda^+ - 1) \neq \emptyset$ if and only if
(\ref{cond-for-Esscher-lambda}) is satisfied.

Provided (\ref{cond-for-Esscher-lambda}), equation
(\ref{Esscher-equation}) is satisfied if and only if
\begin{align}\label{equation-theta}
f(\Theta) = r-q,
\end{align}
where $f : (-\lambda^-,\lambda^+ - 1) \rightarrow \mathbb{R}$ is
defined as $f(\Theta) := f^+(\Theta) + f^-(\Theta)$ with
\begin{align*}
f^+(\Theta) &:= \alpha^+
(\ln(\lambda^+ - \Theta) - \ln(\lambda^+ -1 - \Theta)),
\\ f^-(\Theta) &:= \alpha^- (\ln(\lambda^- + \Theta) - \ln(\lambda^- + 1 +
\Theta)).
\end{align*}
Taking derivatives of $f^+$ and $f^-$, we get that $f$ is strictly increasing.
Noting that
\begin{align*}
\lim_{\Theta \downarrow -\lambda^-} f(\Theta) = -\infty \quad
\text{and} \quad \lim_{\Theta \uparrow \lambda^+ - 1} f(\Theta) =
\infty,
\end{align*}
there exists a unique $\Theta \in (-\lambda^-,\lambda^+ - 1)$
fulfilling (\ref{equation-theta}).
\end{proof}

Hence, a martingale measure, which is an Esscher transform
$\mathbb{P}^{\Theta}$, exists if and only if
(\ref{cond-for-Esscher-lambda}) is satisfied. In order to find the
parameter $\Theta \in (-\lambda^-, \lambda^+ - 1)$, we have to solve
equation (\ref{Esscher-equation}). In general, this has to be done
numerically. In the particular situation $\alpha^+ = \alpha^-$,
which according to \cite[Thm. 3.3]{Kuechler-Tappe} is the case if
and only if $X$ is Variance Gamma, and $r=q$ the solution for
equation (\ref{Esscher-equation}) is given by $\Theta =
\frac{1}{2}(\lambda^+ - \lambda^- - 1)$, the midpoint of the
interval $(-\lambda^-, \lambda^+ - 1)$.

\begin{remark}
For general tempered stable distributions condition (\ref{cond-for-Esscher-lambda}) alone is
not sufficient for the existence of an Esscher martingale measure. This is due to the fact that the
cumulant generating functions of tempered stable distributions, which are not bilateral Gamma, have
finite values at the boundaries, which gives rise to an extra condition on the parameters.
\end{remark}

\section{Existence of minimal entropy martingale measures in bilateral Gamma stock
models}\label{sec-minimal-entropy}

We have seen in the previous section that in bilateral Gamma stock
models an equivalent martingale measure is easy to obtain by solving
equation (\ref{Esscher-equation}), provided condition
(\ref{cond-for-Esscher-lambda}) is satisfied. Moreover, the driving
process $X$ is still a bilateral Gamma process under the new
measure, which allows numerical calculations of option prices.
However, it is not clear that, in reality, the market chooses this
kind of measure.

In the literature, one often performs option pricing by finding an
equivalent martingale measure $\mathbb{Q} \sim \mathbb{P}$ which
minimizes
\begin{align*}
\mathbb{E}[ f (\Lambda_1(\mathbb{Q},\mathbb{P})) ]
\end{align*}
for a strictly convex function $f : (0,\infty) \rightarrow
\mathbb{R}$. Popular choices for the functional $f$ are $f(x) = x^p$ for
some $p > 1$ (see, e.g., \cite{Takuji-p, Hobson, Santacroce-2006, Kohlmann-2007, Jeanblanc, Bender}) and $f(x) = x \ln
x$. In the special case $p=2$ the measure $\mathbb{Q}$ is known as the 
\textit{variance-optimal martingale measure}, see, e.g., \cite{Schweizer-VOMM, Delb-Schach-VOMM, Mania-1, Mania-2, Takuji, Cerny-VOMM}.
It follows from \cite[Ex. 2.7]{Bender} that for bilateral Gamma stock
models a $p$-optimal equivalent martingale measure does not exist, because we would need that the tails for upward jumps are extraordinarily light. We can merely obtain
the existence of a $p$-optimal signed martingale measure and of a $p$-optimal absolutely continuous martingale measure for our model.

In this section, we shall consider the
functional $f(x) = x \ln x$. Then, the quantity
\begin{align*}
\mathbb{H}(\mathbb{Q} \,|\, \mathbb{P}) := \mathbb{E}
[\Lambda_1(\mathbb{Q},\mathbb{P}) \ln
\Lambda_1(\mathbb{Q},\mathbb{P})] = \mathbb{E}_{\mathbb{Q}}[\ln
\Lambda_1(\mathbb{Q},\mathbb{P})]
\end{align*}
is called the \textit{relative entropy}. In connection with
exponential L\'evy models, it has been studied, e.g., in
\cite{Chan,Miyahara,Esche-Schweizer,Hubalek}, see also \cite{Krol}.
The minimal entropy has an information theoretic interpretation:
minimizing relative entropy corresponds to choosing a martingale
measure by adding the least amount of information to the prior
model.

From a mathematical point of view, minimizing the relative entropy
is convenient, because there is a connection between the minimal
entropy and Esscher transforms of the exponential transform
$\tilde{Y} := \mathcal{L}(\tilde{S})$ of the discounted stock price process $\tilde{S}$, which has rigorously been presented in
\cite{Esche-Schweizer} and further been extended in \cite{Hubalek}. Note that $\tilde{S}_t = S_0 e^{Y_t}$, where $Y$ denotes
the L\'evy process $Y_t := X_t - (r-q)t$.

We shall now recall this connection in our framework, where $X$ is a
bilateral Gamma process. By \cite[Thm. 2]{Hubalek}, the exponential
transform $\tilde{Y}$ is a L\'evy process, which has the
characteristic triplet
\begin{align*}
\tilde{b} &= \int_{\mathbb{R}} h(e^x - 1) F(dx) - (r-q),
\\ \tilde{c} &= 0,
\\ \tilde{F}(B) &= \int_{\mathbb{R}} \mathbbm{1}_B (e^x - 1) F(dx), \quad B \in \mathcal{B}(\mathbb{R}).
\end{align*}
with respect to the truncation function $h(x) = x
\mathbbm{1}_{[-1,1]}(x)$, where the L\'evy measure $F$ is given by
(\ref{levy-measure}). Therefore, the cumulant generating function
$\tilde{\Psi}$ of $\tilde{Y}$ exists on $\mathbb{R}_- :=
(-\infty,0]$ and is given by
\begin{equation}\label{cumulant-tilde}
\begin{aligned}
\tilde{\Psi}(z) &= -(r-q)z + \int_{\mathbb{R}} ( e^{zx} - 1 )
\tilde{F}(dx)
\\ &= -(r-q)z + \int_{\mathbb{R}} \left( e^{z(e^x -
1)} - 1 \right) F(dx), \quad z \leq 0.
\end{aligned}
\end{equation}

\begin{definition}
Let $\vartheta \leq 0$ be arbitrary. The {\rm Esscher transform}
$\mathbb{P}_{\vartheta} \overset{\rm loc}{\sim} \mathbb{P}$ is the locally equivalent probability
measure with likelihood process
\begin{align*}
\Lambda_t(\mathbb{P}_{\vartheta},\mathbb{P}) := \frac{d
\mathbb{P}_{\vartheta}}{d \mathbb{P}}\bigg|_{\mathcal{F}_t} =
e^{\vartheta \tilde{Y}_t - \tilde{\Psi}(\vartheta) t}, \quad t \geq
0.
\end{align*}
\end{definition}

The following result, which establishes the connection between the
minimal entropy and Esscher transforms, will be crucial for our
investigations.

\begin{proposition}\label{prop-connection}
There exists a minimal entropy martingale measure if and only if
there exists $\vartheta \leq 0$ such that
$\mathbb{E}_{\vartheta}[\tilde{Y}_1] = 0$. In this case, one minimal
entropy martingale measure is given by $\mathbb{P}_{\vartheta}$.
\end{proposition}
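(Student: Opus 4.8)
The plan is to reduce the assertion to the general characterization of the minimal entropy martingale measure as an Esscher transform of the exponential transform $\tilde{Y} = \mathcal{L}(\tilde{S})$, which is available from \cite{Esche-Schweizer} and \cite{Hubalek}, and to translate the martingale property of $\mathbb{P}_\vartheta$ into the moment condition $\mathbb{E}_\vartheta[\tilde{Y}_1] = 0$. For the latter, I would first note that by Lemma \ref{lemma-stoch-log} and the definition of the stochastic logarithm we have $\tilde{S} = S_0 \mathcal{E}(\tilde{Y})$, with $\tilde{S}$ and its left limit strictly positive. Hence $\tilde{S}$ is a local $\mathbb{P}_\vartheta$-martingale if and only if $\tilde{Y}$ is a local $\mathbb{P}_\vartheta$-martingale, since the stochastic exponential of a semimartingale is a local martingale exactly when the semimartingale is, both directions being legitimate here because $\mathcal{E}(\tilde{Y}) = \tilde{S}/S_0$ stays bounded away from zero on compacts. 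As $\mathbb{P}_\vartheta$ is an Esscher transform, $\tilde{Y}$ remains a L\'evy process under $\mathbb{P}_\vartheta$, so Lemma \ref{lemma-Levy-mt} shows that $\tilde{Y}$ is a local $\mathbb{P}_\vartheta$-martingale if and only if $\mathbb{E}_\vartheta[\tilde{Y}_1] = 0$. Thus $\mathbb{P}_\vartheta$ is a martingale measure precisely when $\mathbb{E}_\vartheta[\tilde{Y}_1] = 0$.

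Next I would invoke the main result of \cite{Hubalek}, which extends \cite{Esche-Schweizer} and states that for exponential L\'evy models a minimal entropy martingale measure exists if and only if there is some $\vartheta \leq 0$ for which the Esscher transform $\mathbb{P}_\vartheta$ of the exponential transform is an equivalent martingale measure, and that in this case $\mathbb{P}_\vartheta$ is a minimal entropy martingale measure. Feeding the equivalence from the previous step into this theorem converts the condition that $\mathbb{P}_\vartheta$ be a martingale measure into $\mathbb{E}_\vartheta[\tilde{Y}_1] = 0$, and yields both the claimed ``if and only if'' and the identification of $\mathbb{P}_\vartheta$ as a minimal entropy martingale measure.

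The step demanding the most care is verifying that our model meets the hypotheses of the cited theorem, in particular that the cumulant generating function $\tilde{\Psi}$ of $\tilde{Y}$ in (\ref{cumulant-tilde}) is finite at the candidate $\vartheta$, so that the Esscher measure is well defined and the relative entropy, which under the martingale condition reduces to $\mathbb{H}(\mathbb{P}_\vartheta \,|\, \mathbb{P}) = -\tilde{\Psi}(\vartheta)$, is finite and indeed minimal. The restriction $\vartheta \leq 0$ is essential rather than cosmetic: the upward jumps $e^x - 1$ of $\tilde{Y}$ inherit the exponentially heavy right tail of the bilateral Gamma increments $x$, so $\mathbb{E}[e^{\vartheta \tilde{Y}_1}]$ is infinite for every $\vartheta > 0$ and the domain of $\tilde{\Psi}$ is exactly $(-\infty,0]$; this is why only nonpositive Esscher parameters appear in the statement.
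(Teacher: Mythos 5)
Your proposal is correct and follows essentially the same route as the paper: the authors likewise combine the Hubalek--Sgarra characterization of the minimal entropy martingale measure as an Esscher transform of the exponential transform $\tilde{Y}$ (their Theorem 8) with Lemma \ref{lemma-Levy-mt} to convert the martingale-measure requirement into the moment condition $\mathbb{E}_{\vartheta}[\tilde{Y}_1]=0$. Your additional details -- the passage between $\tilde{S}=S_0\mathcal{E}(\tilde{Y})$ and $\tilde{Y}$ being a local martingale, and the verification that the domain of $\tilde{\Psi}$ is $(-\infty,0]$ -- are accurate elaborations of steps the paper leaves implicit.
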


\begin{proof}
The assertion follows from \cite[Thm. 8]{Hubalek} and Lemma
\ref{lemma-Levy-mt}.
\end{proof}


We are now ready to characterize all bilateral Gamma stock models,
for which minimal entropy martingale measures exist, and to determine the value of this minimal entropy.

\begin{theorem}\label{thm-MEMM}
Let $X \sim \Gamma(\alpha^+,\lambda^+;\alpha^-,\lambda^-)$ under
$\mathbb{P}$.
\begin{enumerate}
\item If $\lambda^+ \leq 1$, there exists a unique $\vartheta < 0$ such that $\mathbb{P}_{\vartheta}$ is a minimal entropy
martingale measure. It is the unique solution of the equation
\begin{equation}\label{eqn-minimize-entropy}
\begin{aligned}
&\alpha^+ \int_0^{\infty} \frac{1}{x} e^{-\lambda^+ x} (e^x - 1)
e^{\vartheta(e^x - 1)} dx
\\ &+ \alpha^- \int_0^{\infty} \frac{1}{x}
e^{-\lambda^- x} (e^{-x} - 1) e^{\vartheta(e^{-x} - 1)} dx = r-q,
\quad \vartheta \in (-\infty,0).
\end{aligned}
\end{equation}

\item If $\lambda^+ > 1$, there exists $\vartheta \leq 0$ such that $\mathbb{P}_{\vartheta}$ is a minimal entropy
measure if and only if
\begin{align}\label{entropy-inequality}
\alpha^+ \ln \left( \frac{\lambda^+}{\lambda^+ - 1} \right) + \alpha^- \ln \left( \frac{\lambda^-}{\lambda^- + 1} \right) \geq r-q.
\end{align}
If (\ref{entropy-inequality}) is satisfied, $\vartheta$ is unique
and it is the unique solution of equation
(\ref{eqn-minimize-entropy}) for $\vartheta \in \mathbb{R}_-$.

\item If a minimal entropy martingale measure exists, then the value
of the minimal entropy is given by

\begin{equation}\label{value-entropy}
\begin{aligned}
\mathbb{H}(\mathbb{P}_{\vartheta} \,|\, \mathbb{P}) &= - \alpha^+
\int_0^{\infty} \frac{1}{x} e^{-\lambda^+ x} (e^{\vartheta(e^x - 1)}
- 1) dx
\\ &\quad - \alpha^- \int_0^{\infty} \frac{1}{x} e^{-\lambda^- x}
(e^{\vartheta(e^{-x} - 1)} - 1) dx + (r-q) \vartheta.
\end{aligned}
\end{equation}
\end{enumerate}
\end{theorem}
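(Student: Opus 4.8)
The plan is to reduce everything to an analysis of the derivative of the cumulant generating function $\tilde{\Psi}$ of the exponential transform $\tilde{Y}$, since by Proposition \ref{prop-connection} a minimal entropy martingale measure given by an Esscher transform $\mathbb{P}_{\vartheta}$ exists precisely when $\mathbb{E}_{\vartheta}[\tilde{Y}_1] = 0$ for some $\vartheta \leq 0$. The first step is to record the standard Esscher identity $\mathbb{E}_{\vartheta}[\tilde{Y}_1] = \tilde{\Psi}'(\vartheta)$, which follows by differentiating $\mathbb{E}[e^{z \tilde{Y}_1}] = e^{\tilde{\Psi}(z)}$ under the integral sign and using the likelihood process $e^{\vartheta \tilde{Y}_1 - \tilde{\Psi}(\vartheta)}$. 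Differentiating (\ref{cumulant-tilde}) and inserting the L\'evy measure (\ref{levy-measure}), then substituting $x \mapsto -x$ in the negative-jump part, I would show that $\tilde{\Psi}'(\vartheta) = 0$ is exactly equation (\ref{eqn-minimize-entropy}). Defining $g$ as the left-hand side of (\ref{eqn-minimize-entropy}), i.e.\ $g(\vartheta) = \tilde{\Psi}'(\vartheta) + (r-q)$, this reduces parts (1) and (2) to solving $g(\vartheta) = r-q$ on $\mathbb{R}_-$.

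Next I would establish that $g$ is strictly increasing, which is immediate from $\tilde{\Psi}''(\vartheta) = \int_{\mathbb{R}} (e^x - 1)^2 e^{\vartheta(e^x - 1)} F(dx) > 0$. The crux is then the boundary behaviour. Splitting $g = g^+ + g^-$ according to the two integrals, I would argue by dominated convergence that $g^+(\vartheta) \to 0$ as $\vartheta \to -\infty$ (here $e^x - 1 > 0$ forces the exponential factor to vanish), while the negative-jump integrand $\frac{1}{x} e^{-\lambda^- x}(e^{-x} - 1) e^{\vartheta(e^{-x} - 1)}$ is negative and blows up, giving $g^-(\vartheta) \to -\infty$ by monotone convergence; hence $\lim_{\vartheta \to -\infty} g(\vartheta) = -\infty$ in every case. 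The decisive dichotomy appears at the right endpoint $\vartheta = 0$, where the large-$x$ asymptotics of the first integrand are $\frac{1}{x} e^{-(\lambda^+ - 1)x}$: this is integrable iff $\lambda^+ > 1$. When $\lambda^+ \leq 1$, monotone convergence gives $g(\vartheta) \to +\infty$ as $\vartheta \uparrow 0$, so $g$ maps $(-\infty, 0)$ bijectively onto $\mathbb{R}$ and $g(\vartheta) = r-q$ has a unique solution $\vartheta < 0$, proving (1).

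For (2), when $\lambda^+ > 1$ the value $g(0)$ is finite and, by the Frullani identity $\int_0^{\infty} \frac{e^{-ax} - e^{-bx}}{x}\,dx = \ln(b/a)$ applied to each integral, equals $\alpha^+ \ln\bigl( \frac{\lambda^+}{\lambda^+ - 1} \bigr) + \alpha^- \ln\bigl( \frac{\lambda^-}{\lambda^- + 1} \bigr)$, the left-hand side of (\ref{entropy-inequality}). Since $g$ is a strictly increasing continuous bijection from $(-\infty, 0]$ onto $(-\infty, g(0)]$, a solution $\vartheta \leq 0$ of $g(\vartheta) = r-q$ exists (and is unique) iff $r-q \leq g(0)$, which is precisely (\ref{entropy-inequality}); the boundary case $r-q = g(0)$ returns $\vartheta = 0$ and is consistent with Lemma \ref{lemma-martingale-measure}. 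Finally, for (3) I would start from $\ln \Lambda_1(\mathbb{P}_{\vartheta},\mathbb{P}) = \vartheta \tilde{Y}_1 - \tilde{\Psi}(\vartheta)$, so that $\mathbb{H}(\mathbb{P}_{\vartheta} \,|\, \mathbb{P}) = \vartheta\, \mathbb{E}_{\vartheta}[\tilde{Y}_1] - \tilde{\Psi}(\vartheta) = -\tilde{\Psi}(\vartheta)$, the last equality using the optimality relation $\mathbb{E}_{\vartheta}[\tilde{Y}_1] = 0$; expanding $-\tilde{\Psi}(\vartheta)$ via (\ref{cumulant-tilde}) and splitting the L\'evy integral as before yields (\ref{value-entropy}). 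I expect the main obstacle to be the rigorous justification of the interchange of differentiation, limits and integration at the boundaries $\vartheta \to -\infty$ and $\vartheta \uparrow 0$ -- in particular isolating the integrability threshold $\lambda^+ > 1$ from the large-$x$ tail -- together with the Frullani evaluation that ties $g(0)$ to condition (\ref{entropy-inequality}).
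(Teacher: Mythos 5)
Your proposal is correct and follows essentially the same route as the paper: both reduce the problem to the identity $\mathbb{E}_{\vartheta}[\tilde{Y}_1] = \tilde{\Psi}'(\vartheta)$, use strict convexity of $\tilde{\Psi}$ on $(-\infty,0)$, and analyse the boundary behaviour at $\vartheta \to -\infty$ and $\vartheta \uparrow 0$, with the integrability threshold $\lambda^+ > 1$ deciding between cases (1) and (2), and part (3) obtained from $\mathbb{H}(\mathbb{P}_{\vartheta}\,|\,\mathbb{P}) = -\tilde{\Psi}(\vartheta)$. The only cosmetic difference is that you evaluate $g(0)$ by Frullani integrals where the paper simply reads it off as $\Psi(1)-(r-q)$ from the known cumulant generating function (\ref{cumulant}), and the paper cites \cite[Thm. 1]{Hubalek} and \cite[Lemma 26.4]{Sato} for the facts you propose to verify directly.
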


\begin{proof}
For each $\vartheta \leq 0$ the characteristic triplet of
$\tilde{Y}$ with respect to the truncation function $h(x) = x
\mathbbm{1}_{[-1,1]}(x)$ under the measure $\mathbb{P}_{\vartheta}$
is, according to \cite[Thm. 1]{Hubalek}, given by
\begin{align*}
\tilde{b}_{\vartheta} &= \tilde{b} + \int_{\mathbb{R}} (e^{\vartheta
x} - 1)h(x) \tilde{F}(dx) = \int_{\mathbb{R}} h(e^x - 1) e^{\vartheta(e^x -
1)} F(dx) - (r-q),
\\ \tilde{c}_{\vartheta} &= 0,
\\ \tilde{F}_{\vartheta}(dx) &= e^{\vartheta x} \tilde{F}(dx).
\end{align*}
Hence, we have $\mathbb{E}_{\vartheta}|\tilde{Y}_1| < \infty$,
$\vartheta < 0$ and, by (\ref{levy-measure}), the expectations are
given by
\begin{align*}
\mathbb{E}_{\vartheta}[\tilde{Y}_1] &= \int_{\mathbb{R}} (e^x - 1)
e^{\vartheta(e^x - 1)} F(dx) - (r-q)
\\ &= \alpha^+ \int_0^{\infty} \frac{1}{x} e^{-\lambda^+ x} (e^x - 1) e^{\vartheta(e^x - 1)} dx
\\ &\quad + \alpha^- \int_0^{\infty} \frac{1}{x} e^{-\lambda^- x} (e^{-x} -
1) e^{\vartheta(e^{-x} - 1)} dx - (r-q), \quad \vartheta \leq 0.
\end{align*}
Moreover, we have $\mathbb{E}_0 |\tilde{Y}_1| < \infty$ if and only
if $\lambda^+ > 1$, and in this case the expectation is, by (\ref{cumulant}), given by
\begin{align}\label{boundary}
\mathbb{E}_0 [\tilde{Y}_1] = \Psi(1) - (r-q) = \alpha^+ \ln \left(
\frac{\lambda^+}{\lambda^+ - 1} \right) + \alpha^- \ln \left(
\frac{\lambda^-}{\lambda^- + 1} \right) - (r-q)
\end{align}
According to \cite[Lemma 26.4]{Sato} the cumulant generating
function $\tilde{\Psi}$ of $\tilde{Y}$ is of class $C^{\infty}$ on
$(-\infty,0)$, we have $\tilde{\Psi}'' > 0$ on $(-\infty,0)$ and the
first derivative is, by the representation (\ref{cumulant-tilde}),
given by
\begin{align*}
\tilde{\Psi}'(\vartheta) = -(r-q) + \int_{\mathbb{R}} (e^x - 1)
e^{\vartheta(e^x - 1)} F(dx) = \mathbb{E}_{\vartheta}[\tilde{Y}_1],
\quad \vartheta \in (-\infty,0).
\end{align*}
Note that $\tilde{\Psi}'(\vartheta) \downarrow - \infty$ for
$\vartheta \downarrow -\infty$.

Let us now consider the situation $\lambda^+ \leq 1$. Then we have
$\tilde{\Psi}'(\vartheta) \uparrow \infty$ for $\vartheta \uparrow
0$. Since $\tilde{\Psi}'$ is continuous and $\tilde{\Psi}'' > 0$ on
$(-\infty,0)$, Proposition \ref{prop-connection} yields the first
statement.

In the case $\lambda^+ > 1$, an analogous argumentation yields, by
taking into account (\ref{boundary}), the second statement.

If a minimal entropy martingale measure exists, then the value of
the minimal entropy is given by
\begin{align*}
\mathbb{H}(\mathbb{P}_{\vartheta} \,|\, \mathbb{P}) =
\mathbb{E}_{\vartheta}[\ln
\Lambda_1(\mathbb{P}_{\vartheta},\mathbb{P})] =
\mathbb{E}_{\vartheta}[\vartheta \tilde{Y}_1 -
\tilde{\Psi}(\vartheta)] = -\tilde{\Psi}(\vartheta),
\end{align*}
which, by (\ref{cumulant-tilde}) and (\ref{levy-measure}), gives us
(\ref{value-entropy}).
\end{proof}

We emphasize that for bilateral Gamma stock models the value of the minimal entropy in (\ref{value-entropy}) can easily be calculated numerically in terms of the parameters $\alpha^+,\lambda^+,\alpha^-,\lambda^-$.

\begin{remark}
For tempered stable processes a similar version of Theorem \ref{thm-MEMM} holds true. The terms in (\ref{eqn-minimize-entropy}) and (\ref{value-entropy}) slightly change, namely, the fraction $\frac{1}{x}$ is replaced by $\frac{1}{x^{1 + \beta^+}}$ resp. $\frac{1}{x^{1 + \beta^-}}$, where $\beta^+,\beta^- \in (0,1)$ are the additional two parameters. Condition (\ref{entropy-inequality}) differs considerably in the tempered stable case, due to the different form of the cumulant generating function $\Psi$.
\end{remark}

\section{Existence of minimal entropy martingale measures preserving the class of bilateral Gamma
processes}\label{sec-bilateral-Esscher}

We have seen in the previous section that for bilateral Gamma stock
models we obtain the minimal entropy martingale measure
$\mathbb{P}_{\vartheta}$ by solving equation
(\ref{eqn-minimize-entropy}) numerically, provided $\lambda^+ \leq
1$ or condition (\ref{entropy-inequality}) is satisfied. Under
$\mathbb{P}_{\vartheta}$, the bilateral Gamma process $X$ is still a
L\'evy process, a result which is due to \cite{Esche-Schweizer}, and
we know its characteristic triplet. However, neither its one-dimensional density nor its characteristic
function is available in closed form, hence we cannot perform
option pricing numerically.

Recall that, on the other hand, the Esscher transform from Section
\ref{sec-Esscher} leaves the family of bilateral Gamma processes
invariant.

Our idea in this section is therefore as follows. We minimize the
relative entropy within the class of bilateral Gamma processes by
performing \textit{bilateral} Esscher transforms.

\begin{definition}
Let $X \sim \Gamma(\alpha^+,\lambda^+;\alpha^-,\lambda^-)$ under
$\mathbb{P}$ and let $\theta^+ \in (-\infty,\lambda^+)$ and
$\theta^- \in (-\infty,\lambda^-)$ be arbitrary. The {\rm bilateral
Esscher transform} $\mathbb{P}^{(\theta^+,\theta^-)} \overset{\rm loc}{\sim} \mathbb{P}$ is defined as
the locally equivalent probability measure with likelihood process
\begin{align*}
\Lambda_t(\mathbb{P}^{(\theta^+,\theta^-)},\mathbb{P}) := \frac{d
\mathbb{P}^{(\theta^+,\theta^-)}}{d
\mathbb{P}}\bigg|_{\mathcal{F}_t} = e^{\theta^+ X_t^+ -
\Psi^+(\theta^+) t} \cdot e^{\theta^- X_t^- - \Psi^-(\theta^-) t},
\quad t \geq 0
\end{align*}
where $\Psi^+, \Psi^-$ denote the cumulant generating functions
given by (\ref{cumulant-plus}), (\ref{cumulant-minus}).
\end{definition}

Note that the Esscher transforms $\mathbb{P}^{\Theta}$ from Section
\ref{sec-Esscher} are special cases of the just introduced bilateral
Esscher transforms $\mathbb{P}^{(\theta^+,\theta^-)}$. Indeed, it
holds
\begin{align}\label{special-case}
\mathbb{P}^{\Theta} = \mathbb{P}^{(\Theta,-\Theta)}, \quad \Theta
\in (-\lambda^-,\lambda^+).
\end{align}

\begin{lemma}\label{lemma-two-sided-preserving}
Let $X \sim \Gamma(\alpha^+,\lambda^+;\alpha^-,\lambda^-)$ under
$\mathbb{P}$ and let $\theta^+ \in (-\infty,\lambda^+)$ and
$\theta^- \in (-\infty,\lambda^-)$ be arbitrary. Then we have $X
\sim \Gamma(\alpha^+,\lambda^+ - \theta^+;\alpha^-,\lambda^- -
\theta^-)$ under $\mathbb{P}^{(\theta^+,\theta^-)}$.
\end{lemma}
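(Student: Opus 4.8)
The plan is to exploit the product structure of the likelihood process together with the independence of $X^+$ and $X^-$. Writing $\Lambda_t(\mathbb{P}^{(\theta^+,\theta^-)},\mathbb{P}) = \Lambda_t^+ \cdot \Lambda_t^-$ with $\Lambda_t^+ := e^{\theta^+ X_t^+ - \Psi^+(\theta^+)t}$ and $\Lambda_t^- := e^{\theta^- X_t^- - \Psi^-(\theta^-)t}$, I would first observe that $\Lambda^+$ depends only on the Gamma process $X^+$ and $\Lambda^-$ only on the independent Gamma process $X^-$, so that each factor is the likelihood process of a one-sided Esscher transform. The conceptual route is then: each one-sided Esscher transform turns $X^\pm \sim \Gamma(\alpha^\pm,\lambda^\pm)$ into $\Gamma(\alpha^\pm,\lambda^\pm - \theta^\pm)$ --- which is precisely the content of Proposition 2.1.3 and Example 2.1.4 in \cite{Kuechler-Soerensen} already invoked for Lemma \ref{lemma-bilateral-preserving} --- and, since the Radon--Nikodym density factorises across the two independent coordinates, $X^+$ and $X^-$ remain independent under $\mathbb{P}^{(\theta^+,\theta^-)}$. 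Combining these, $X = X^+ - X^-$ is a bilateral Gamma process with parameters $(\alpha^+,\lambda^+ - \theta^+;\alpha^-,\lambda^- - \theta^-)$.

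The most direct and self-contained verification is to compute the moment generating function of $X_1$ under the new measure and to recognise it. Using $X_1 = X_1^+ - X_1^-$, the independence of $X_1^+$ and $X_1^-$, and the identity $\mathbb{E}[e^{w X_1^\pm}] = e^{\Psi^\pm(w)}$, one obtains for $z$ in a neighbourhood of the origin
\begin{align*}
\mathbb{E}_{(\theta^+,\theta^-)}[e^{z X_1}] &= \mathbb{E}\big[ e^{z(X_1^+ - X_1^-)} \Lambda_1^+ \Lambda_1^- \big] \\
&= e^{\Psi^+(z + \theta^+) - \Psi^+(\theta^+)} \cdot e^{\Psi^-(\theta^- - z) - \Psi^-(\theta^-)}.
\end{align*}
Inserting (\ref{cumulant-plus}) and (\ref{cumulant-minus}), the logarithmic terms telescope to
\begin{align*}
\mathbb{E}_{(\theta^+,\theta^-)}[e^{z X_1}] = \left( \frac{\lambda^+ - \theta^+}{\lambda^+ - \theta^+ - z} \right)^{\alpha^+} \left( \frac{\lambda^- - \theta^-}{\lambda^- - \theta^- + z} \right)^{\alpha^-},
\end{align*}
which, by (\ref{cumulant}) and (\ref{cf-bil-Gamma}), is exactly the generating function of $\Gamma(\alpha^+,\lambda^+ - \theta^+;\alpha^-,\lambda^- - \theta^-)$. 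By the same computation the strip on which it is finite shifts to $(\theta^- - \lambda^-, \lambda^+ - \theta^+)$, matching the domain (\ref{cumulant}) of the claimed law, which is a useful consistency check.

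Since $X$ is a L\'evy process under $\mathbb{P}$ and the Esscher density has the exponential form above, $X$ remains a L\'evy process under $\mathbb{P}^{(\theta^+,\theta^-)}$ by standard exponential-tilting theory; combined with the identification of the law of $X_1$ this yields the claim in the sense of the convention fixed after (\ref{cf-bil-Gamma}). I expect the only genuinely delicate points to be bookkeeping rather than substance: keeping track of the admissible ranges $\theta^+ \in (-\infty,\lambda^+)$ and $\theta^- \in (-\infty,\lambda^-)$ so that all generating functions above are finite, and reading the powers off the main branch of the complex logarithm when passing to the characteristic function (\ref{cf-bil-Gamma}). As a final sanity check, specialising to $\theta^- = -\theta^+ = -\Theta$ via (\ref{special-case}) recovers Lemma \ref{lemma-bilateral-preserving}.
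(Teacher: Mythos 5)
Your argument is correct, and it establishes exactly the fact the paper needs; the paper itself proves this lemma with a bare citation to Proposition 2.1.3 and Example 2.1.4 of \cite{Kuechler-Soerensen} (the general theory of exponential tilting of L\'evy processes), which is the same underlying mechanism you invoke. Your explicit factorisation of the density across the independent components $X^+$ and $X^-$, the telescoping moment generating function computation, and the check that the strip of finiteness shifts to $(-(\lambda^- - \theta^-), \lambda^+ - \theta^+)$ simply make self-contained what the paper delegates to the reference, and your specialisation $\theta^- = -\theta^+$ recovering Lemma \ref{lemma-bilateral-preserving} is consistent with (\ref{special-case}).
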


\begin{proof}
This follows from Proposition 2.1.3 and Example 2.1.4 in \cite{Kuechler-Soerensen}.
\end{proof}

According to \cite[Prop. 6.1]{Kuechler-Tappe}, the parameters $\alpha^+$ and $\alpha^-$ cannot be changed to
other parameters by an equivalent measure transformation. Consequently, any equivalent measure transformation, under which $X$ is still a bilateral Gamma process, is a bilateral Esscher transform.

\begin{lemma}\label{lemma-mm-bilateral}
Let $X \sim \Gamma(\alpha^+,\lambda^+;\alpha^-,\lambda^-)$ under
$\mathbb{P}$ and let $\theta^+ \in (-\infty,\lambda^+)$ and
$\theta^- \in (-\infty,\lambda^-)$ be arbitrary. Then
$\mathbb{P}^{(\theta^+,\theta^-)}$ is a martingale measure if and
only if $\theta^+ \in (\lambda^+ - (1- \exp(-\frac{r -
q}{\alpha^+}))^{-1},\lambda^+ - 1)$ and
\begin{align}\label{eqn-parameter}
\theta^- = \Phi(\theta^+),
\end{align}
where $\Phi : (\lambda^+ - (1- \exp(-\frac{r -
q}{\alpha^+}))^{-1},\lambda^+ - 1) \rightarrow (-\infty,\lambda^-)$
is defined as the strictly increasing function
\begin{align}\label{def-phi}
\Phi(\theta) := \lambda^- - \Bigg( \sqrt[\alpha^-]{\bigg(
\frac{\lambda^+ - \theta}{\lambda^+ - \theta - 1} \bigg)^{\alpha^+}
e^{- (r-q)}} - 1 \Bigg)^{-1}.
\end{align}
By convention, we set $\lambda^+ - (1- \exp(-\frac{r -
q}{\alpha^+}))^{-1} := -\infty$ if $r = q$.
\end{lemma}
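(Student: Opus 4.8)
The plan is to reduce the claim to the single-parameter martingale characterization already established and then to solve the resulting equation explicitly. By Lemma \ref{lemma-two-sided-preserving}, under $\mathbb{P}^{(\theta^+,\theta^-)}$ the process $X$ is again bilateral Gamma, namely $X \sim \Gamma(\alpha^+,\lambda^+ - \theta^+;\alpha^-,\lambda^- - \theta^-)$. Thus $\mathbb{P}^{(\theta^+,\theta^-)}$ is a martingale measure exactly when it is a martingale measure for these shifted parameters, and this is governed entirely by Lemma \ref{lemma-martingale-measure}. Applying that lemma with $\lambda^+$ replaced by $\lambda^+ - \theta^+$ and $\lambda^-$ by $\lambda^- - \theta^-$, Part (2) immediately rules out $\theta^+ \geq \lambda^+ - 1$ (then $\lambda^+ - \theta^+ \leq 1$), while for $\theta^+ < \lambda^+ - 1$ the measure is a martingale measure if and only if
\[
\left( \frac{\lambda^+ - \theta^+}{\lambda^+ - \theta^+ - 1} \right)^{\alpha^+} \left( \frac{\lambda^- - \theta^-}{\lambda^- - \theta^- + 1} \right)^{\alpha^-} = e^{r-q}.
\]

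Next I would solve this equation for $\theta^-$. Writing $\mu := \lambda^- - \theta^- > 0$, isolating the second factor and taking $\alpha^-$-th roots reduces it to the linear equation
\[
\frac{\mu}{\mu + 1} = \left( \left( \frac{\lambda^+ - \theta^+}{\lambda^+ - \theta^+ - 1} \right)^{\alpha^+} e^{-(r-q)} \right)^{-1/\alpha^-},
\]
whose rearrangement gives $\mu = \big( \sqrt[\alpha^-]{(\frac{\lambda^+ - \theta^+}{\lambda^+ - \theta^+ - 1})^{\alpha^+} e^{-(r-q)}} - 1 \big)^{-1}$, that is, $\theta^- = \Phi(\theta^+)$ with $\Phi$ as defined. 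Since each manipulation is an equivalence, this shows that for admissible $\theta^+$ the value $\theta^-$ is uniquely determined.

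It remains to pin down the admissible range of $\theta^+$ and the monotonicity of $\Phi$. For a solution $\mu > 0$ (equivalently $\theta^- \in (-\infty,\lambda^-)$) to exist, the $\alpha^-$-th root in the denominator must exceed $1$, which is equivalent to $(\frac{\lambda^+ - \theta^+}{\lambda^+ - \theta^+ - 1})^{\alpha^+} > e^{r-q}$; this simultaneously guarantees $\Phi(\theta^+) < \lambda^-$, so the resulting $\theta^-$ is indeed admissible. Taking $\alpha^+$-th roots and solving $\frac{\lambda^+ - \theta^+}{\lambda^+ - \theta^+ - 1} > e^{(r-q)/\alpha^+}$ for $\theta^+$ yields the lower endpoint $\lambda^+ - (1 - \exp(-\frac{r-q}{\alpha^+}))^{-1}$, with the degenerate case $r = q$ (where the inequality holds for every $\theta^+ < \lambda^+ - 1$) producing the stated convention $-\infty$. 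Strict monotonicity of $\Phi$ then follows by composing monotone maps: as $\theta^+$ increases, $\lambda^+ - \theta^+$ decreases, so $\frac{\lambda^+ - \theta^+}{\lambda^+ - \theta^+ - 1}$ and hence its $\alpha^-$-th root increase, the denominator in the definition of $\Phi$ increases, its reciprocal decreases, and therefore $\Phi(\theta^+)$ increases.

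The main obstacle is the careful determination of the lower endpoint: the inequality $\frac{\lambda^+ - \theta^+}{\lambda^+ - \theta^+ - 1} > e^{(r-q)/\alpha^+}$ must be inverted with attention to the sign of $1 - e^{(r-q)/\alpha^+}$, and the boundary case $r = q$ treated separately. The remaining ingredients -- the reduction through the two preceding lemmas, the algebraic solution for $\theta^-$, and the chain-of-monotonicity argument -- are routine.
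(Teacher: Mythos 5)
Your proposal is correct and follows exactly the route the paper intends: the paper's proof is the one-line remark that the lemma is an immediate consequence of Lemma \ref{lemma-two-sided-preserving} and Lemma \ref{lemma-martingale-measure}, and your argument simply carries out the algebra behind that remark (substituting the shifted parameters into (\ref{martingale-eqn}), solving for $\theta^-$, and identifying the admissible range of $\theta^+$ from the positivity requirement on $\lambda^- - \theta^-$). The sign analysis at the lower endpoint is unproblematic because the standing assumption $r \geq q$ guarantees $e^{(r-q)/\alpha^+} \geq 1$.
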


\begin{proof}
This is an immediate consequence of Lemma
\ref{lemma-two-sided-preserving} and Lemma
\ref{lemma-martingale-measure}.
\end{proof}

As pointed out above, all equivalent measure transformations preserving the class of bilateral Gamma processes are bilateral Esscher transforms. Hence, we introduce the set of parameters
\begin{align*}
\mathcal{M}_{\mathbb{P}} := \{ (\theta^+, \theta^-) \in
(-\infty,\lambda^+) \times (-\infty,\lambda^-) \,|\,
\text{$\mathbb{P}^{(\theta^+,\theta^-)}$ is a martingale measure} \}
\end{align*}
such that the bilateral Esscher transform is a martingale measure. The previous Lemma \ref{lemma-mm-bilateral} tells us that
\begin{align}\label{set-MM}
\mathcal{M}_{\mathbb{P}} = \{ (\theta,\Phi(\theta)) \in \mathbb{R}^2
\,|\, \theta \in ( \lambda^+ - ( 1- \exp ( {\textstyle -\frac{r -
q}{\alpha^+}} ) )^{-1},\lambda^+ - 1 ) \}.
\end{align}
The following consequence contributes to Theorem \ref{thm-Esscher}.
It tells us that, provided (\ref{cond-for-Esscher-lambda}) is
satisfied, we can, instead of solving (\ref{Esscher-equation}),
alternatively solve equation (\ref{fixed-point-eqn}) below in order
to find the Esscher transform $\mathbb{P}^{\Theta}$.

\begin{corollary}\label{cor-Esscher-eqn}
Let $X \sim \Gamma(\alpha^+,\lambda^+;\alpha^-,\lambda^-)$ under
$\mathbb{P}$. If (\ref{cond-for-Esscher-lambda}) is satisfied, then
the unique $\Theta \in (-\lambda^-,\lambda^+ - 1)$ such that
$\mathbb{P}^{\Theta}$ is a martingale measure is the unique solution
of the equation
\begin{align}\label{fixed-point-eqn}
\Phi(\Theta) = -\Theta, \quad \Theta \in (\lambda^+ - (1-
\exp({\textstyle -\frac{r - q}{\alpha^+}}))^{-1},\lambda^+ - 1).
\end{align}
\end{corollary}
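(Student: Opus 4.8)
The corollary connects the Esscher transform $\mathbb{P}^{\Theta}$ from Section~\ref{sec-Esscher} with the bilateral Esscher framework of this section. The key observation is~(\ref{special-case}): the one-parameter Esscher transforms sit inside the two-parameter bilateral family via $\mathbb{P}^{\Theta} = \mathbb{P}^{(\Theta,-\Theta)}$. Thus $\mathbb{P}^{\Theta}$ is a martingale measure if and only if the parameter pair $(\Theta,-\Theta)$ lies in the set $\mathcal{M}_{\mathbb{P}}$ of martingale-measure parameters. The plan is therefore to translate the membership condition $(\Theta,-\Theta) \in \mathcal{M}_{\mathbb{P}}$, as described by~(\ref{set-MM}), into the fixed-point equation~(\ref{fixed-point-eqn}).

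**Key steps.** First I would invoke Theorem~\ref{thm-Esscher}: under hypothesis~(\ref{cond-for-Esscher-lambda}) there exists a unique $\Theta \in (-\lambda^-, \lambda^+ - 1)$ such that $\mathbb{P}^{\Theta}$ is a martingale measure. Next, using~(\ref{special-case}), I would rewrite this as $\mathbb{P}^{(\Theta,-\Theta)}$ being a martingale measure, i.e.\ $(\Theta,-\Theta) \in \mathcal{M}_{\mathbb{P}}$. By the characterization~(\ref{set-MM}) of $\mathcal{M}_{\mathbb{P}}$, this holds precisely when $\Theta$ lies in the admissible interval $(\lambda^+ - (1-\exp(-\frac{r-q}{\alpha^+}))^{-1}, \lambda^+ - 1)$ and the second coordinate equals $\Phi(\Theta)$; since that second coordinate is $-\Theta$, the condition becomes exactly $\Phi(\Theta) = -\Theta$. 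This is~(\ref{fixed-point-eqn}). Uniqueness of the solution then follows from the uniqueness of $\Theta$ already established in Theorem~\ref{thm-Esscher}, together with the fact that $(\Theta,-\Theta) \in \mathcal{M}_{\mathbb{P}}$ forces $\Theta$ into the stated interval.

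**Main obstacle.** The only point requiring care is the reconciliation of the two intervals. Theorem~\ref{thm-Esscher} places $\Theta$ in $(-\lambda^-, \lambda^+ - 1)$, whereas~(\ref{fixed-point-eqn}) restricts $\Theta$ to $(\lambda^+ - (1-\exp(-\frac{r-q}{\alpha^+}))^{-1}, \lambda^+ - 1)$, and I must check these are compatible at the solution. The resolution is that the map $\Phi$ in~(\ref{def-phi}) is only defined on the latter interval, so any $\Theta$ satisfying $\Phi(\Theta)=-\Theta$ automatically lies there; conversely, the Esscher martingale parameter from Theorem~\ref{thm-Esscher}, being an element of $\mathcal{M}_{\mathbb{P}}$ through~(\ref{special-case}), necessarily satisfies the admissibility constraint of~(\ref{set-MM}) and hence falls in the smaller interval as well. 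I expect this interval-matching to be the crux, but it is handled transparently by Lemma~\ref{lemma-mm-bilateral}, which builds the admissible interval directly into the definition of $\Phi$; the remainder of the argument is a formal substitution.
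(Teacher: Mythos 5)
Your argument is correct and follows exactly the paper's route: the published proof is the one-line observation that the claim follows from Theorem \ref{thm-Esscher} together with relations (\ref{special-case}) and (\ref{set-MM}), which is precisely the substitution you carry out. Your additional care about reconciling the two intervals is sound and merely makes explicit what the paper leaves implicit.
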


\begin{proof}
The proof follows from Theorem \ref{thm-Esscher} and relations
(\ref{special-case}), (\ref{set-MM}).
\end{proof}

We have considered the case $\alpha^+ = \alpha^-$ and $r=q$ at the
end of Section \ref{sec-Esscher}. In this particular situation $\Phi
: (-\infty,\lambda^+ - 1) \rightarrow (-\infty,\lambda^-)$ is given
by $\Phi(\theta) = \lambda^- - \lambda^+ + \theta + 1$, whence we
see again, this time by solving equation (\ref{fixed-point-eqn}),
that the Esscher parameter is given by $\Theta =
\frac{1}{2}(\lambda^+ - \lambda^- - 1)$.

We are now ready to treat the existence of minimal entropy
martingale measures in bilateral Gamma stock markets, under which
$X$ remains a bilateral Gamma process.

\begin{theorem}\label{thm-min-entropy-bilateral}
Let $X \sim \Gamma(\alpha^+,\lambda^+;\alpha^-,\lambda^-)$ under
$\mathbb{P}$ with arbitrary parameters
$\alpha^+,\lambda^+,\alpha^-,\lambda^- > 0$. Then, there exist
$\theta^+ \in (-\infty,\lambda^+)$ and $\theta^- \in
(-\infty,\lambda^-)$ such that
\begin{align}\label{min-entropy-bilateral}
\mathbb{H}(\mathbb{P}^{(\theta^+,\theta^-)} \,|\, \mathbb{P}) =
\min_{(\vartheta^+, \vartheta^-) \in \mathcal{M}_{\mathbb{P}}}
\mathbb{H}(\mathbb{P}^{(\vartheta^+,\vartheta^-)} \,|\, \mathbb{P}).
\end{align}
In this case, we have $\theta^+ \in (\lambda^+ - (1- \exp(-\frac{r -
q}{\alpha^+}))^{-1},\lambda^+ - 1)$, relation (\ref{eqn-parameter})
is satisfied, and $\theta^+$ minimizes the function $f : (\lambda^+
- (1- \exp(-\frac{r - q}{\alpha^+}))^{-1}, \lambda^+ - 1)
\rightarrow \mathbb{R}$ defined as
\begin{equation}\label{def-f}
\begin{aligned}
f(\theta) &:= \alpha^+ \left( \frac{\lambda^+}{\lambda^+ - \theta} -
1 - \ln \bigg( \frac{\lambda^+}{\lambda^+ - \theta} \bigg) \right)
\\ & \quad + \alpha^- \left( \frac{\lambda^-}{\lambda^- - \Phi(\theta)} - 1 -
\ln \bigg( \frac{\lambda^-}{\lambda^- - \Phi(\theta)} \bigg)
\right).
\end{aligned}
\end{equation}
Moreover, we have $X \sim \Gamma(\alpha^+,\lambda^+ -
\theta^+;\alpha^-,\lambda^- - \theta^-)$ under
$\mathbb{P}^{(\theta^+,\theta^-)}$, and the value of the minimal
entropy is given by
\begin{align}\label{compute-entropy}
\mathbb{H}(\mathbb{P}^{(\theta^+,\theta^-)} \,|\, \mathbb{P}) =
f(\theta^+).
\end{align}
\end{theorem}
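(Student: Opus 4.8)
The plan is to evaluate the relative entropy explicitly for an arbitrary bilateral Esscher transform, restrict it to the one-dimensional martingale-measure curve from Lemma \ref{lemma-mm-bilateral}, and then check that the resulting function is coercive, so that its minimum is attained in the interior of the open parameter interval.

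First I would compute $\mathbb{H}(\mathbb{P}^{(\theta^+,\theta^-)} \,|\, \mathbb{P})$ for arbitrary $\theta^+ \in (-\infty,\lambda^+)$ and $\theta^- \in (-\infty,\lambda^-)$. From the likelihood process we have $\ln \Lambda_1(\mathbb{P}^{(\theta^+,\theta^-)},\mathbb{P}) = \theta^+ X_1^+ - \Psi^+(\theta^+) + \theta^- X_1^- - \Psi^-(\theta^-)$, so, denoting by $\mathbb{E}_{(\theta^+,\theta^-)}$ the expectation under $\mathbb{P}^{(\theta^+,\theta^-)}$,
\[
\mathbb{H}(\mathbb{P}^{(\theta^+,\theta^-)} \,|\, \mathbb{P}) = \theta^+ \mathbb{E}_{(\theta^+,\theta^-)}[X_1^+] + \theta^- \mathbb{E}_{(\theta^+,\theta^-)}[X_1^-] - \Psi^+(\theta^+) - \Psi^-(\theta^-).
\]
By Lemma \ref{lemma-two-sided-preserving}, under $\mathbb{P}^{(\theta^+,\theta^-)}$ the processes $X^+, X^-$ are standard Gamma processes with rate parameters $\lambda^+ - \theta^+$ and $\lambda^- - \theta^-$, whence $\mathbb{E}_{(\theta^+,\theta^-)}[X_1^{\pm}] = \alpha^{\pm}/(\lambda^{\pm} - \theta^{\pm})$. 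Substituting these means together with (\ref{cumulant-plus}) and (\ref{cumulant-minus}), and using $\theta^{\pm}/(\lambda^{\pm} - \theta^{\pm}) = \lambda^{\pm}/(\lambda^{\pm} - \theta^{\pm}) - 1$, the entropy separates into the sum of the two summands appearing in (\ref{def-f}), evaluated at $\theta^+$ and $\theta^-$ respectively; in particular it is always finite.

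Next I would invoke Lemma \ref{lemma-mm-bilateral} and the representation (\ref{set-MM}), which exhibit $\mathcal{M}_{\mathbb{P}}$ as the graph of $\Phi$ over the interval $I := (\lambda^+ - (1 - \exp(-\frac{r-q}{\alpha^+}))^{-1}, \lambda^+ - 1)$. Minimizing the entropy over $\mathcal{M}_{\mathbb{P}}$ is then precisely minimizing $\theta \mapsto f(\theta)$ from (\ref{def-f}) over $\theta \in I$, because inserting $\theta^- = \Phi(\theta)$ into the previous computation turns the sum of the two summands into $f(\theta)$. Here $f$ is continuous, and each summand has the form $\alpha(x - 1 - \ln x)$ with $x > 0$, hence is nonnegative.

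Finally I would establish coercivity, that is $f(\theta) \to \infty$ at both endpoints of $I$, which forces a minimizer $\theta^+$ in the interior; setting $\theta^- = \Phi(\theta^+)$ and applying Lemma \ref{lemma-two-sided-preserving} then gives the remaining assertions, with minimal value $f(\theta^+)$. The substantial step is the boundary analysis of $\Phi$. As $\theta \uparrow \lambda^+ - 1$ the fraction $(\lambda^+ - \theta)/(\lambda^+ - \theta - 1)$ diverges, so $\Phi(\theta) \uparrow \lambda^-$ and the $\alpha^-$-summand $\alpha^-(x - 1 - \ln x)$ with $x = \lambda^-/(\lambda^- - \Phi(\theta)) \to \infty$ diverges. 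As $\theta$ decreases to the left endpoint of $I$, a short computation shows this endpoint is exactly the value at which $\Phi(\theta) \to -\infty$, so the $\alpha^-$-summand again diverges; in the case $r = q$ the left endpoint is $-\infty$, and there the $\alpha^+$-summand diverges instead, since $\lambda^+/(\lambda^+ - \theta) \to 0$. In every case at least one summand blows up, so $f$ is coercive and attains its minimum on $I$. This coercivity at the open endpoints is the heart of the argument and the main obstacle.
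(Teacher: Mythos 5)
Your proposal is correct and follows essentially the same route as the paper: compute $\mathbb{H}(\mathbb{P}^{(\theta^+,\theta^-)}\,|\,\mathbb{P}) = \alpha^+ g(\tfrac{\lambda^+}{\lambda^+-\theta^+}) + \alpha^- g(\tfrac{\lambda^-}{\lambda^--\theta^-})$ with $g(x)=x-1-\ln x$, restrict to the graph of $\Phi$ via (\ref{set-MM}), and show $f$ blows up at both endpoints of the parameter interval (in the paper, via the limits of $\tfrac{\lambda^+}{\lambda^+-\theta}$ and $\tfrac{\lambda^-}{\lambda^--\Phi(\theta)}$, which is exactly your boundary analysis of $\Phi$). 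Your identification of the left endpoint as the value where $\Phi(\theta)\to-\infty$, and the separate treatment of the case $r=q$, match the paper's argument.
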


\begin{proof}
For $\theta^+ \in (-\infty,\frac{\lambda^+}{p})$ and $\theta^- \in (-\infty,\frac{\lambda^-}{p})$ the relative entropy is, by taking into account Lemma \ref{lemma-two-sided-preserving}, given by
\begin{align*}
\mathbb{H}(\mathbb{P}^{(\theta^+,\theta^-)} \,|\, \mathbb{P}) &= \mathbb{E}_{\mathbb{P}^{(\theta^+,\theta^-)}} \bigg[ \ln \bigg( \frac{d \mathbb{P}^{(\theta^+,\theta^-)}}{d \mathbb{P}} \bigg) \bigg]
\\ &= \mathbb{E}_{\mathbb{P}^{(\theta^+,\theta^-)}} \big[ \theta^+ X_1^+ - \Psi^+(\theta^+) + \theta^- X_1^- - \Psi^-(\theta^-) \big]
\\ &= \frac{\theta^+ \alpha^+}{\lambda^+ - \theta^+} - \Psi^+(\theta^+) + \frac{\theta^- \alpha^-}{\lambda^- - \theta^-} - \Psi^-(\theta^-)
\\ &= \alpha^+ g \bigg( \frac{\lambda^+}{\lambda^+ - \theta^+}
\bigg) + \alpha^- g \bigg( \frac{\lambda^-}{\lambda^- -
\theta^-} \bigg),
\end{align*}
where $g : (0,\infty) \rightarrow \mathbb{R}$ denotes the strictly
convex function $g(x) = x - 1 - \ln x$. Thus, for all $(\theta^+,\theta^-)
\in \mathcal{M}_{\mathbb{P}}$ the relative entropy
$\mathbb{H}(\mathbb{P}^{(\theta^+,\theta^-)} \,|\, \mathbb{P})$ is
given by (\ref{compute-entropy}), and we can write the function $f$ as
\begin{align}\label{repr-f}
f(\theta) = \alpha^+ g \bigg( \frac{\lambda^+}{\lambda^+ - \theta}
\bigg) + \alpha^- g \bigg( \frac{\lambda^-}{\lambda^- -
\Phi(\theta)} \bigg).
\end{align}
Note that
\begin{align*}
\lim_{\theta \downarrow \lambda^+ - (1- \exp(-\frac{r -
q}{\alpha^+}))^{-1}} \frac{\lambda^+}{\lambda^+ - \theta} \in
[0,\lambda^+) \quad \text{and} \quad \lim_{\theta \uparrow \lambda^+
- 1} \frac{\lambda^+}{\lambda^+ - \theta} = \lambda^+
\end{align*}
as well as
\begin{align*}
\lim_{\theta \downarrow \lambda^+ - (1- \exp(-\frac{r -
q}{\alpha^+}))^{-1}} \frac{\lambda^-}{\lambda^- - \Phi(\theta)} = 0
\quad \text{and} \quad \lim_{\theta \uparrow \lambda^+ - 1}
\frac{\lambda^-}{\lambda^- - \Phi(\theta)} = \infty.
\end{align*}
We conclude that
\begin{align*}
\lim_{\theta \downarrow \lambda^+ - (1- \exp(-\frac{r -
q}{\alpha^+}))^{-1}} f(\theta) = \infty \quad \text{and} \quad
\lim_{\theta \uparrow \lambda^+ - 1} f(\theta) = \infty.
\end{align*}
Since $f$ is continuous, it attains a minimum and the assertion
follows.
\end{proof}

Consequently, unlike the Esscher transform $\mathbb{P}^{\theta}$
from Section \ref{sec-Esscher} and the real minimal entropy
martingale measure $\mathbb{P}_{\vartheta}$ from Section
\ref{sec-minimal-entropy}, the minimal entropy martingale measure
$\mathbb{P}^{(\theta,\Phi(\theta))}$ in the subclass of all measures
which leave bilateral Gamma processes invariant, \textit{always}
exists. In this case, it is also possible to determine the minimal entropy numerically
by minimizing the function $f$ defined in (\ref{def-f}). A comparison with the value in (\ref{value-entropy}) will be given in a concrete example in Section \ref{sec-numerics}.

\begin{remark}\label{remark-Esscher}
If condition (\ref{cond-for-Esscher-lambda}) is satisfied, choosing
the Esscher parameter $\Theta \in (-\lambda^-,\lambda^+ - 1)$ from
Section \ref{sec-Esscher} and inserting it into the function $f$, we
obtain, by Corollary \ref{cor-Esscher-eqn} and the representation
(\ref{repr-f}) of $f$,
\begin{align}\label{insert-Esscher}
\mathbb{H}(\mathbb{P}^{\Theta} \,|\, \mathbb{P}) = f(\Theta) =
\alpha^+ g \bigg( \frac{\lambda^+}{\lambda^+ - \Theta} \bigg) +
\alpha^- g \bigg( \frac{\lambda^-}{\lambda^- + \Theta} \bigg).
\end{align}
Note that one of the arguments in (\ref{insert-Esscher}) for the
function $g$ is greater than $1$, whereas the other argument is
smaller than $1$. Since the strictly convex function $g$ attains its global minimum at $x = 1$, and, by Taylor's theorem, behaves like $(x-1)^2$ around $x=1$, the relative entropy
$\mathbb{H}(\mathbb{P}^{\Theta} \,|\, \mathbb{P})$ in
(\ref{insert-Esscher}) is, in general, not too far from the minimal relative
entropy $\mathbb{H}(\mathbb{P}^{(\theta,\Phi(\theta))} \,|\,
\mathbb{P})$ in (\ref{min-entropy-bilateral}). Therefore, we expect
that, typically, the Esscher parameter $\Theta$ and the bilateral
Esscher parameter $\theta$ are close to each other.
\end{remark}

In general, we have the inequalities
\begin{align*}
\mathbb{H}(\mathbb{P}_{\vartheta} \,|\, \mathbb{P}) <
\mathbb{H}(\mathbb{P}^{(\theta,\Phi(\theta))} \,|\, \mathbb{P}) <
\mathbb{H}(\mathbb{P}^{\Theta} \,|\, \mathbb{P}),
\end{align*}
provided, the respective measures exist. Using our preceding
results, we can compute the values of the respective entropies
numerically and see, how close they are to each other.

Since $X$ is still a bilateral Gamma process under the bilateral
Esscher transform $\mathbb{P}^{(\theta,\Phi(\theta))}$, we can
perform option pricing by the method of Fourier transformation.

The characteristic function of a bilateral Gamma distribution is
given by (\ref{cf-bil-Gamma}) and all increments of $X$ are
bilateral Gamma distributed, see (\ref{distribution-bil-time}).
Therefore, if $\lambda^+ - \theta
> 1$, the price of a European call option with strike price $K$ and
time of maturity $T$ is given by
\begin{equation}\label{option-pricing}
\begin{aligned}
\pi(S_0) &= - \frac{e^{-rT} K}{2 \pi} \int_{i \nu - \infty}^{i \nu +
\infty} \left( \frac{K}{S_0} \right)^{iz} \bigg( \frac{\lambda^+ -
\theta}{\lambda^+ - \theta + iz} \bigg)^{\alpha^+ T} 
\\ &\qquad\qquad\qquad \times \bigg(
\frac{\lambda^- - \Phi(\theta)}{\lambda^- - \Phi(\theta) - iz}
\bigg)^{\alpha^- T} \frac{dz}{z(z-i)},
\end{aligned}
\end{equation}
where $\nu \in (1,\lambda^+ - \theta)$ is arbitrary. This follows
from \cite[Thm. 3.2]{Lewis}.

As mentioned in Section \ref{sec-minimal-entropy}, for bilateral Gamma stock models the $p$-optimal martingale
measure does not exist. However, we can, as provided for the minimal entropy martingale measure, determine
the $p$-optimal martingale measure within the class of bilateral Gamma processes.

The $p$-distance of a bilateral Esscher transform is easy to compute. Indeed, since $X^+$ and $X^-$ are independent, for $p > 1$ and $\theta^+ \in (-\infty,\frac{\lambda^+}{p})$, $\theta^- \in (-\infty,\frac{\lambda^-}{p})$ the $p$-distance is given by
\begin{equation}\label{p-distance}
\begin{aligned}
&\mathbb{E} \left[ \bigg( \frac{d \mathbb{P}^{(\theta^+,\theta^-)}}{d \mathbb{P}} \bigg)^p \right] = e^{-p(\Psi^+(\theta^+) + \Psi^-(\theta^-))} \mathbb{E} \big[ e^{p \theta^+ X_1^+} \big] \mathbb{E} \big[ e^{p \theta^- X_1^-} \big]
\\ &= \exp \big( -p (\Psi^+(\theta^+) + \Psi^-(\theta^-)) + \Psi^+(p\theta^+) + \Psi^-(p \theta^-) \big)
\\ &= \bigg( \frac{\lambda^+}{\lambda^+ - p\theta^+} \bigg)^{\alpha^+} \bigg( \frac{\lambda^-}{\lambda^- - p \theta^-} \bigg)^{\alpha^-} \bigg( \frac{\lambda^+ - \theta^+}{\lambda^+} \bigg)^{p \alpha^+} \bigg( \frac{\lambda^- - \theta^-}{\lambda^-} \bigg)^{p \alpha^-}.
\end{aligned}
\end{equation}
A similar argumentation as in Theorem \ref{thm-min-entropy-bilateral} shows that, for each choice of the parameters $\alpha^+,\lambda^+,\alpha^-,\lambda^- > 0$ there exists a pair $(\theta^+,\theta^-)$ minimizing the $p$-distance (\ref{p-distance}), and in this case we also have (\ref{eqn-parameter}), where $\theta^+$ minimizes the function
\begin{align}\label{def-f-p}
f_p(\theta) = \bigg( \frac{\lambda^+}{\lambda^+ - p\theta} \bigg)^{\alpha^+} \bigg( \frac{\lambda^-}{\lambda^- - p \Phi(\theta)} \bigg)^{\alpha^-} \bigg( \frac{\lambda^+ - \theta}{\lambda^+} \bigg)^{p \alpha^+} \bigg( \frac{\lambda^- - \Phi(\theta)}{\lambda^-} \bigg)^{p \alpha^-}.
\end{align}
As we shall see in the numerical illustration of the upcoming section, we have $\theta_p \rightarrow \theta$ for $p \downarrow 1$, where for each $p > 1$ the parameter $\theta_p$ minimizes (\ref{def-f-p}) and $\theta$ minimizes (\ref{def-f}). This is not surprising, since it is known that, under suitable technical conditions, the $p$-optimal martingale measure converges to the minimal entropy martingale measure for $p \downarrow 1$, see, e.g., \cite{Grandits, Grandits-R, Santacroce-2005, Jeanblanc, Bender, Kohlmann-2008}. Here, we shall only give an intuitive argument. For $\alpha,\lambda > 0$ consider the functions
\begin{align*}
h(\theta) &= \alpha \bigg[ \frac{\lambda}{\lambda - \theta} - 1 - \ln \bigg( \frac{\lambda}{\lambda - \theta} \bigg) \bigg],
\\ h_p(\theta) &= \bigg( \frac{\lambda}{\lambda - p\theta} \bigg)^{\alpha} \bigg( \frac{\lambda - \theta}{\lambda} \bigg)^{p \alpha} \quad \text{for $p > 1$.}
\end{align*}
Note that $h$ and $h_p$ have a global minimum at $\theta = 0$. Moreover,
if $p > 1$ is close to $1$, then the functions $(p-1) h(\theta)$ and $\ln h_p(\theta)$ are very close to each other. Indeed, Taylor's theorem shows that, for an appropriate $\xi \in \mathbb{R}$, which is between $1$ and $\frac{\lambda - p\theta}{\lambda - \theta}$, we have
\begin{align*}
&| (p-1) h(\theta) - \ln h_p(\theta) |
\\ &= \alpha \bigg| (p-1) \bigg[ \frac{\lambda}{\lambda - \theta} - 1 - \ln \bigg( \frac{\lambda}{\lambda - \theta} \bigg) \bigg] - \ln \bigg( \frac{\lambda}{\lambda - p \theta} \bigg) + p \ln \bigg( \frac{\lambda}{\lambda - \theta} \bigg) \bigg|
\\ &= \alpha \bigg| \frac{(p-1)\theta}{\lambda - \theta} + \ln \bigg( \frac{\lambda - p\theta}{\lambda - \theta} \bigg) \bigg|
\\ &= \alpha \bigg| \frac{(p-1)\theta}{\lambda - \theta} + \bigg[ \frac{\lambda - p\theta}{\lambda - \theta} - 1 - \frac{1}{2 \xi^2} \bigg( \frac{\lambda - p\theta}{\lambda - \theta} - 1 \bigg)^2 \bigg] \bigg|
= \frac{\alpha}{2 \xi^2} \bigg( \frac{\lambda - p\theta}{\lambda - \theta} - 1 \bigg)^2.
\end{align*}
For $p > 1$ close to $1$, the expression $\frac{\lambda - p\theta}{\lambda - \theta}$ is close to $1$ for all relevant $\theta$. Intuitively, this explains the convergence $\theta_p \rightarrow \theta$ for $p \downarrow 1$, which implies the convergence of the $p$-optimal martingale measure to the minimal entropy martingale measure.

\begin{remark}
For tempered stable processes, which are not bilateral Gamma, the situation in this section becomes more involved. It may happen that $\mathcal{M}_{\mathbb{P}} = \emptyset$, i.e. no bilateral Esscher martingale measure exists, which is again due to the behaviour of the cumulant generating function at its boundary. Consequently, the statement of Theorem \ref{thm-min-entropy-bilateral} does not hold true in this case. We require an extra condition on the parameters in order to ensure the existence of a minimal entropy martingale measure or of a $p$-optimal martingale measure. 
We emphasize that for tempered stable processes the function $\Phi$ defined in (\ref{def-phi}) is no longer available in closed form, which complicates the minimizing procedures of this section for concrete examples.
\end{remark}

\section{Existence of the minimal martingale measure in bilateral Gamma stock
models}\label{sec-mmm}

In this section, we deal with the existence of the \textit{minimal martingale measure} in bilateral Gamma stock models. The minimal martingale measure was introduced in \cite{FS} with the motivation of constructing optimal hedging strategies. 
Throughout this section, we fix a finite time horizon $T > 0$ and assume that $\lambda^+ > 2$. This allows us to define the finite value
\begin{align}\label{def-c}
c = c(\alpha^+,\alpha^-,\lambda^+,\lambda^-,r,q) = \frac{\Psi(1) - (r-q)}{\Psi(2) - 2\Psi(1)},
\end{align}
where $\Psi$ denotes the cumulant generating function defined in (\ref{cumulant}). Note that
\begin{align}\label{denom-pos}
\Psi(2) - 2\Psi(1) = \int_{\mathbb{R}} (e^{2x} - 1) F(dx) - 2 \int_{\mathbb{R}} (e^{x} - 1) F(dx) = \int_{\mathbb{R}} (e^{x} - 1)^2 F(dx) > 0,
\end{align}
whence (\ref{def-c}) is well-defined.
For technical reasons, we shall also assume that the filtration $(\mathcal{F}_t)_{t \geq 0}$ is generated by the bilateral Gamma process $X$.

In order to define the minimal martingale measure, we require the canonical decomposition of the discounted stock prices $\tilde{S}$ defined in (\ref{discounted-price}).

\begin{lemma}\label{lemma-canon-decomp}
The process $\tilde{S}$ is a special semimartingale with canonical decomposition
\begin{align}\label{canon-decomp}
\tilde{S}_t = S_0 + M_t + A_t, \quad t \in [0,T]
\end{align}
where $M$ and $A$ denote the processes
\begin{align}\label{M-part}
M_t &= \int_0^t \int_{\mathbb{R}} \tilde{S}_{s-} (e^x - 1) (\mu^X(ds,dx) - F(dx)ds), \quad t \in [0,T]
\\ A_t &= ( \Psi(1) - (r-q) ) \int_0^t \tilde{S}_{s} ds, \quad t \in [0,T].
\end{align}
The process $M$ is a square-integrable martingale and the finite variation part $A$ has the representation
\begin{align}\label{repr-FV-part}
A_t = \int_0^t \alpha_s d \langle M \rangle_s, \quad t \in [0,T]
\end{align}
where $\alpha$ denotes the predictable process
\begin{align}
\alpha_t = \frac{c}{\tilde{S}_{t-}}, \quad t \in [0,T].
\end{align}
\end{lemma}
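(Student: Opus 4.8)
The plan is to read off the decomposition from It\^o's formula applied to the exponential $\tilde S_t = S_0 e^{Y_t}$, where $Y_t := X_t - (r-q)t$, and then to compensate the resulting jump integral. Because the characteristic triplet of $X$ with respect to the truncation function $h \equiv 0$ is $(0,0,F)$, the process $X$ has no continuous martingale part and is of finite variation on $[0,T]$; hence $X_t = \int_0^t \int_{\mathbb{R}} x\, \mu^X(ds,dx)$ and the continuous quadratic variation of $Y$ vanishes. It\^o's formula for $f(x) = e^x$ then yields
\[
\tilde S_t = S_0 + \int_0^t \tilde S_{s-}\, dX_s - (r-q) \int_0^t \tilde S_{s-}\, ds + \sum_{0 < s \le t} \tilde S_{s-} \big( e^{\Delta X_s} - 1 - \Delta X_s \big),
\]
and combining the first integral, written as $\int_0^t \int_{\mathbb{R}} \tilde S_{s-} x\, \mu^X(ds,dx)$, with the final sum, written as $\int_0^t \int_{\mathbb{R}} \tilde S_{s-} (e^x - 1 - x)\, \mu^X(ds,dx)$, collapses the jump contributions into $\int_0^t \int_{\mathbb{R}} \tilde S_{s-} (e^x - 1)\, \mu^X(ds,dx)$. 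Splitting $\mu^X = (\mu^X(ds,dx) - F(dx)ds) + F(dx)ds$ now separates this into the compensated integral $M$ of (\ref{M-part}) plus the absolutely continuous term $\int_0^t \tilde S_{s-} \big( \int_{\mathbb{R}} (e^x - 1) F(dx) \big) ds$. The key identity $\int_{\mathbb{R}} (e^x - 1) F(dx) = \Psi(1)$, immediate from (\ref{cumulant}) since the triplet refers to $h \equiv 0$, together with the $-(r-q)$ drift produces precisely $A$ in (\ref{M-part}); here I use that $\tilde S_{s-} = \tilde S_s$ for Lebesgue-almost every $s$. Alternatively, one may invoke $\tilde S = S_0 \mathcal{E}(\tilde Y)$ from Section \ref{sec-minimal-entropy} and split the L\'evy process $\tilde Y$ into its drift and compensated-jump parts.

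For the square-integrability of $M$ I would appeal to the It\^o isometry for compensated Poisson integrals,
\[
\mathbb{E}[M_t^2] = \mathbb{E} \Big[ \int_0^t \int_{\mathbb{R}} \tilde S_{s-}^2 (e^x - 1)^2 F(dx)\, ds \Big] = (\Psi(2) - 2\Psi(1)) \int_0^t \mathbb{E}[\tilde S_s^2]\, ds,
\]
where the inner L\'evy integral is evaluated by the computation (\ref{denom-pos}), namely $\int_{\mathbb{R}} (e^x - 1)^2 F(dx) = \Psi(2) - 2\Psi(1)$. The standing assumption $\lambda^+ > 2$ is exactly what makes $\Psi(2)$ finite, hence this integral finite; moreover $\mathbb{E}[\tilde S_s^2] = S_0^2 e^{(\Psi(2) - 2(r-q))s}$ is bounded on $[0,T]$, so that $\mathbb{E}[M_t^2] < \infty$. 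Consequently $M$ is a genuine square-integrable martingale, not merely a local one.

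Finally, since $M$ is purely discontinuous its predictable quadratic variation is the compensator of $\sum_{s \le \cdot} (\Delta M_s)^2$, that is
\[
\langle M \rangle_t = \int_0^t \int_{\mathbb{R}} \tilde S_{s-}^2 (e^x - 1)^2 F(dx)\, ds = (\Psi(2) - 2\Psi(1)) \int_0^t \tilde S_{s-}^2\, ds,
\]
so that with $\alpha_t = c / \tilde S_{t-}$ one obtains $\int_0^t \alpha_s\, d\langle M \rangle_s = c (\Psi(2) - 2\Psi(1)) \int_0^t \tilde S_{s-}\, ds$. By the very definition (\ref{def-c}) of $c$ we have $c (\Psi(2) - 2\Psi(1)) = \Psi(1) - (r-q)$, and this equals $A_t$, which establishes (\ref{repr-FV-part}). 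The only genuinely technical step is the square-integrability estimate; everything else is bookkeeping of It\^o's formula together with the two cumulant identities, and the assumption $\lambda^+ > 2$ enters solely to guarantee the finiteness of $\Psi(2)$ and hence of both $\mathbb{E}[M_t^2]$ and $\langle M \rangle$.
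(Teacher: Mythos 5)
Your proposal is correct and follows essentially the same route as the paper: It\^o's formula applied to $\tilde S_t = S_0 e^{Y_t}$, compensation of the jump integral using $\Psi(1) = \int_{\mathbb{R}}(e^x-1)F(dx)$, square-integrability of $M$ from $\lambda^+>2$ via the It\^o isometry, and the identification $\langle M\rangle_t = (\Psi(2)-2\Psi(1))\int_0^t \tilde S_s^2\,ds$ combined with the definition of $c$. The only cosmetic differences are that you make the bound $\mathbb{E}[\tilde S_s^2] = S_0^2 e^{(\Psi(2)-2(r-q))s}$ explicit and phrase the bracket computation via the compensator of $\sum(\Delta M_s)^2$ rather than citing \cite[Thm. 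II.1.33.a]{JS}.
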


\begin{remark}
In (\ref{M-part}), $\mu^X$ denotes the random measure associated to the jumps of $X$, and in (\ref{repr-FV-part}) the process $\langle M \rangle$ denotes the predictable quadratic variation of $M$, i.e., the unique predictable process with paths of finite variation such that $M^2 - \langle M \rangle$ is a local martingale, see, e.g., \cite[Sec. I.4a]{JS}.
\end{remark}

\begin{proof}
Note that we can express the discounted stock prices (\ref{discounted-price}) as
\begin{align*}
\tilde{S}_t = e^{Y_t}, \quad t \in [0,T]
\end{align*}
where $Y$ denotes the process
\begin{align*}
Y_t = X_t - (r-q)t, \quad t \in [0,T].
\end{align*}
Using It\^{o}'s formula \cite[Thm. I.4.57]{JS} we obtain
\begin{align*}
\tilde{S}_t &= S_0 + \int_0^t e^{Y_{s-}} dY_s + \sum_{s \leq t} \Big( e^{Y_s} - e^{Y_{s-}} - e^{Y_{s-}} \Delta Y_s \Big)
\\ &= S_0 + \int_0^t \int_{\mathbb{R}} x \tilde{S}_{s-} \mu^X(ds,dx) - (r-q) \int_0^t \tilde{S}_{s} ds 
\\ &\quad + \int_0^t \int_{\mathbb{R}} \tilde{S}_{s-} (e^x - 1 - x) \mu^X(ds,dx)
\\ &= S_0 + \int_0^t \int_{\mathbb{R}} \tilde{S}_{s-} (e^x - 1) (\mu^X(ds,dx) - F(dx)ds) 
\\ &\quad + \bigg( \int_{\mathbb{R}} (e^x - 1) F(dx) - (r-q) \bigg) \int_0^t \tilde{S}_{s} ds,
\end{align*}
and hence, $\tilde{S}$ is a special semimartingale with canonical decomposition (\ref{canon-decomp}). Recalling that $\lambda^+ > 2$, the process $M$ is a square-integrable martingale, because by (\ref{denom-pos}), (\ref{distribution-bil-time}) and (\ref{cumulant}) we have
\begin{align*}
&\mathbb{E} \bigg[ \int_0^T \int_{\mathbb{R}} | \tilde{S}_{s} (e^x - 1) |^2 F(dx) ds \bigg] = \bigg( \int_{\mathbb{R}} (e^x - 1)^2 F(dx) \bigg) \bigg( \int_0^T \mathbb{E} [ \tilde{S}_{s}^2 ] ds \bigg)
\\ &= S_0 ( \Psi(2) - 2 \Psi(1) ) \int_0^T e^{-(r-q)s} \mathbb{E}[e^{X_s}] ds
\\ &= S_0 ( \Psi(2) - 2 \Psi(1) ) \int_0^T e^{-(r-q)s} \bigg( \frac{\lambda^+}{\lambda^+ - 1} \bigg)^{\alpha^+ s} \bigg( \frac{\lambda^-}{\lambda^- + 1} \bigg)^{\alpha^- s} ds < \infty.
\end{align*}
Using \cite[Thm. II.1.33.a]{JS}, the predictable quadratic variation is given by
\begin{align*}
\langle M \rangle_t = \bigg( \int_{\mathbb{R}} (e^x - 1)^2 F(dx) \bigg) \int_0^t \tilde{S}_{s}^2 ds = ( \Psi(2) - 2 \Psi(1) ) \int_0^t \tilde{S}_{s}^2 ds, \quad t \in [0,T].
\end{align*}
Therefore, we deduce (\ref{repr-FV-part}).
\end{proof}

Lemma \ref{lemma-canon-decomp} shows that $\tilde{S}$ satisfies the so-called structure condition (SC) from \cite{Schweizer-MMM}. Therefore, according to \cite[Prop. 2]{Schweizer-MMM}, the stochastic exponential
\begin{align}\label{density-proc}
\hat{Z}_t = \mathcal{E} \bigg( - \int_0^{\bullet} \alpha_s dM_s \bigg)_t, \quad t \in [0,T]
\end{align}
is a \textit{martingale density} for $\tilde{S}$, that is $\hat{Z}$ and $\hat{Z} \tilde{S}$ are local $\mathbb{P}$-martingales and $\mathbb{P}(\hat{Z}_0 = 1) = 1$. The measure transformation
\begin{align}\label{MMM-trafo}
\frac{d \hat{\mathbb{P}}}{d \mathbb{P}} := \hat{Z}_T
\end{align}
defines a (possibly signed) measure $\hat{\mathbb{P}}$, the so-called \textit{minimal martingale measure} for $\tilde{S}$.

We are now interested in the question when $\hat{Z}$ is a \textit{strict martingale density}, that is, the martingale density $\hat{Z}$ is strictly positive. In this case, the martingale density $\hat{Z}$ is a strictly positive supermartingale, and hence the transformation (\ref{MMM-trafo}) defines (after normalizing, if necessary) a probability measure. If $\hat{Z}$ is even a strictly positive $\mathbb{P}$-martingale, then $\hat{\mathbb{P}}$ is a martingale measure for $\tilde{S}$.

\begin{theorem}
The following statements are equivalent.
\begin{enumerate}
\item $\hat{Z}$ is a strict martingale density for $\tilde{S}$.

\item $\hat{Z}$ is a strictly positive $\mathbb{P}$-martingale.

\item We have
\begin{align}\label{cond-Psi-1-2}
-1 \leq c \leq 0.
\end{align}

\item We have
\begin{align}\label{mmm-cond-1}
\bigg( \frac{\lambda^+}{\lambda^+ - 1} \bigg)^{\alpha^+} \bigg( \frac{\lambda^-}{\lambda^- + 1} \bigg)^{\alpha^-} &\leq e^{r-q}
\\ \label{mmm-cond-2} \text{and} \quad \bigg( \frac{\lambda^+ - 2}{\lambda^+ - 1} \bigg)^{\alpha^+} \bigg( \frac{\lambda^- + 2}{\lambda^- + 1} \bigg)^{\alpha^-} &\leq e^{-(r-q)}.
\end{align}

\end{enumerate}
If the previous conditions are satisfied, then under the minimal martingale measure $\hat{\mathbb{P}}$ we have
\begin{align}\label{X-under-MMM}
X \sim \Gamma((c+1) \alpha^+,\lambda^+;(c+1) \alpha^-,\lambda^-) * \Gamma(-c \alpha^+,\lambda^+ - 1;-c \alpha^-,\lambda^- + 1).
\end{align}
\end{theorem}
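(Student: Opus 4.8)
The plan is to first render the density process $\hat{Z}$ completely explicit and then read off all four conditions from a single positivity requirement on its jumps. Substituting $\alpha_s = c/\tilde{S}_{s-}$ and the representation (\ref{M-part}) of $M$ into (\ref{density-proc}), the factor $\tilde{S}_{s-}$ cancels and
\begin{align*}
-\int_0^t \alpha_s\,dM_s = -c\int_0^t \int_{\mathbb{R}}(e^x-1)\big(\mu^X(ds,dx)-F(dx)\,ds\big) =: N_t
\end{align*}
is a purely discontinuous local martingale with $\Delta N_s = -c(e^{\Delta X_s}-1)$. Using $\int_{\mathbb{R}}(e^x-1)F(dx)=\Psi(1)$ and the product formula for the stochastic exponential of a purely discontinuous process, the exponential and product corrections cancel, and I would arrive at the closed form
\begin{align*}
\hat{Z}_t = e^{c\Psi(1)t}\prod_{s\le t}\big(1-c(e^{\Delta X_s}-1)\big), \quad t\in[0,T],
\end{align*}
so that the governing quantity is the jump multiplier $Y(x):=1-c(e^x-1)=(1+c)-ce^x$.

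For the equivalence (1) $\Leftrightarrow$ (3) I would argue that $\hat{Z}=\mathcal{E}(N)$ is strictly positive precisely when every factor is positive, i.e.\ when $Y(x)>0$ for $F$-almost every $x$; since $F$ in (\ref{levy-measure}) has full support $\mathbb{R}\setminus\{0\}$, this means $Y(x)>0$ for every $x\neq 0$. Letting $x\to+\infty$ rules out $c>0$, while $Y(x)\to 1+c$ as $x\to-\infty$ rules out $c<-1$; conversely, for $c\in[-1,0]$ one checks directly that $Y(x)>0$ for all $x\neq 0$. This yields exactly (\ref{cond-Psi-1-2}). The equivalence (3) $\Leftrightarrow$ (4) is a direct computation: because $\Psi(2)-2\Psi(1)>0$ by (\ref{denom-pos}), the bound $c\le 0$ reads $\Psi(1)\le r-q$, which via (\ref{cumulant}) is (\ref{mmm-cond-1}), while $c\ge -1$ reads $\Psi(2)-\Psi(1)\ge r-q$, which after simplifying $\Psi(2)-\Psi(1)=\alpha^+\ln\frac{\lambda^+-1}{\lambda^+-2}+\alpha^-\ln\frac{\lambda^-+1}{\lambda^-+2}$ is (\ref{mmm-cond-2}).

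The main obstacle is the implication (3) $\Rightarrow$ (2), since (2) $\Rightarrow$ (1) is immediate. For $c\in[-1,0]$ the multiplier $Y$ is nonnegative, so $\hat{F}(dx):=Y(x)F(dx)$ is again a L\'evy measure, and the task is to upgrade the strictly positive local martingale $\hat{Z}$ to a genuine martingale. I would verify the Hellinger-type bound $\int_{\mathbb{R}}(\sqrt{Y(x)}-1)^2 F(dx)<\infty$ together with $\int_{\{|x|\le 1\}}|x|\,|Y(x)-1|\,F(dx)<\infty$; both are elementary, using $Y(x)-1=O(|x|)$ near $0$, the decay furnished by $\lambda^+>2$ as $x\to+\infty$, and the boundedness of $Y$ as $x\to-\infty$. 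By the criterion for mutual local equivalence of L\'evy processes with coinciding (here vanishing) Gaussian parts, see \cite[Thm.~33.1]{Sato} or \cite[Thm.~III.3.24]{JS}, these conditions ensure that $\hat{Z}$ is a strictly positive true $\mathbb{P}$-martingale and that (\ref{MMM-trafo}) defines a probability measure $\hat{\mathbb{P}}\sim\mathbb{P}$ under which $X$ is L\'evy with triplet $(0,0,\hat{F})$.

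It then remains to identify the law (\ref{X-under-MMM}). Expanding $Y(x)=(1+c)-ce^x$ against (\ref{levy-measure}) gives, for $x>0$,
\begin{align*}
\hat{F}(dx) = \Big((1+c)\tfrac{\alpha^+}{x}e^{-\lambda^+x}+(-c)\tfrac{\alpha^+}{x}e^{-(\lambda^+-1)x}\Big)dx,
\end{align*}
and symmetrically for $x<0$, where $e^x=e^{-|x|}$ turns the second term into $(-c)\tfrac{\alpha^-}{|x|}e^{-(\lambda^-+1)|x|}$. Thus $\hat{F}$ is the sum of the L\'evy measure of $\Gamma((1+c)\alpha^+,\lambda^+;(1+c)\alpha^-,\lambda^-)$ and that of $\Gamma(-c\alpha^+,\lambda^+-1;-c\alpha^-,\lambda^-+1)$; since the L\'evy measure of an independent sum is additive, $X$ has under $\hat{\mathbb{P}}$ the convolution law (\ref{X-under-MMM}). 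As a sanity check, $c=0$ degenerates to $\hat{\mathbb{P}}=\mathbb{P}$, and $c=-1$ gives $\hat{Z}_t=e^{X_t-\Psi(1)t}$, the Esscher transform with parameter $\Theta=1$, which is consistent with Theorem \ref{thm-Esscher} since $\lambda^+>2$.
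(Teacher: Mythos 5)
Your argument is correct and follows the same skeleton as the paper's proof: positivity of the jump multiplier $Y(x)=c+1-ce^{x}$ gives (1) $\Leftrightarrow$ (3), the sign analysis of $c$ via $\Psi(1)$ and $\Psi(2)-\Psi(1)$ gives (3) $\Leftrightarrow$ (4) (your algebra for (\ref{mmm-cond-2}) matches the paper's), and the identity $\hat{F}=YF$ decomposed into two bilateral Gamma L\'evy measures yields (\ref{X-under-MMM}). The two genuine differences are these. First, where the paper obtains the positivity condition $\alpha_t\Delta M_t<1$ by citing \cite[Prop. 2]{Schweizer-MMM}, you derive it from an explicit closed form $\hat{Z}_t=e^{c\Psi(1)t}\prod_{s\le t}\bigl(1-c(e^{\Delta X_s}-1)\bigr)$, which is valid here because $\int_{\mathbb{R}}|e^{x}-1|\,F(dx)<\infty$ makes $N$ of finite variation; this is a nice concretization, though note that strict positivity a.s.\ really requires $F(\{Y\le 0\})=0$ rather than literally $Y(x)>0$ for every $x\neq 0$ (the distinction is immaterial for the function $Y$ at hand). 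Second, for the implication (3) $\Rightarrow$ (2) the paper verifies $\int_{\mathbb{R}}\bigl(Y\ln Y-(Y-1)\bigr)\,dF<\infty$ and invokes \cite[Cor. 6, Props. 2 and 7]{Esche-Schweizer}, which directly address when a stochastic exponential of the form $\mathcal{E}\bigl(\int(Y-1)(\mu^{X}-F\,ds)\bigr)$ is a true martingale and identify the law of $X$ under the new measure; you instead verify the Hellinger-type condition $\int(\sqrt{Y}-1)^{2}\,dF<\infty$ together with $\int_{\{|x|\le 1\}}|x|\,|Y-1|\,dF<\infty$ and invoke the Sato/Jacod--Shiryaev equivalence theorem for L\'evy processes. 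Both routes work and the integrability checks are comparably easy under $\lambda^{+}>2$; the only thing your route leaves implicit is the identification of the density process appearing in \cite[Thm. 33.1]{Sato} with the specific process $\hat{Z}$ of (\ref{density-proc}) --- this is standard (it is the same Dol\'eans exponential), but you should state it, since otherwise you have only shown that \emph{some} strictly positive martingale implements the measure change, not that $\hat{Z}$ itself is a martingale.
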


\begin{remark}
Relation (\ref{X-under-MMM}) means that under $\hat{\mathbb{P}}$ the driving process $X$ is the sum of two independent bilateral Gamma processes, that is, the characteristic function of $X_1$ under $\hat{\mathbb{P}}$ is given by
\begin{equation}\label{cf-MMM}
\begin{aligned}
\hat{\phi}(z) &= \bigg( \frac{\lambda^+}{\lambda^+ - iz} \bigg)^{(c+1)\alpha^+} \bigg( \frac{\lambda^-}{\lambda^- + iz} \bigg)^{(c+1)\alpha^-} 
\\ &\quad \times \bigg( \frac{\lambda^+ - 1}{\lambda^+ - 1 - iz} \bigg)^{-c \alpha^+} \bigg( \frac{\lambda^- + 1}{\lambda^- + 1 + iz} \bigg)^{-c \alpha^-}, \quad z \in \mathbb{R}.
\end{aligned}
\end{equation}
There are the following two boundary values.
\begin{itemize}
\item In the case $c = 0$ we have
\begin{align*}
X \sim \Gamma(\alpha^+,\lambda^+,\alpha^-,\lambda^-) \quad \text{under $\hat{\mathbb{P}}$,}
\end{align*}
i.e., the minimal martingale measure $\hat{\mathbb{P}}$ coincides with the physical measure $\mathbb{P}$. Indeed, the
definition (\ref{def-c}) of $c$ and Lemma \ref{lemma-martingale-measure} show that $\mathbb{P}$ already is a martingale measure for $\tilde{S}$.

\item In the case $c = -1$ we have
\begin{align*}
X \sim \Gamma(\alpha^+,\lambda^+-1,\alpha^-,\lambda^- + 1) \quad \text{under $\hat{\mathbb{P}}$,}
\end{align*}
i.e., the minimal martingale measure $\hat{\mathbb{P}}$ coincides with the Esscher transform $\mathbb{P}^1$, see Theorem \ref{thm-Esscher}. Indeed, the definition (\ref{def-c}) of $c$ shows that equation (\ref{Esscher-equation}) is satisfied with $\Theta = 1$.
\end{itemize}

\end{remark}

\begin{proof}
According to \cite[Prop. 2]{Schweizer-MMM} the process $\hat{Z}$ is a strict martingale density for $\tilde{S}$ if and only if we have almost surely
\begin{align*}
\alpha_t \Delta M_t < 1, \quad t \in [0,T].
\end{align*}
Taking into account Lemma \ref{lemma-canon-decomp}, this is satisfied if and only if we have almost surely
\begin{align*}
c (e^{\Delta X_t} - 1) < 1, \quad t \in [0,T]
\end{align*}
and this is the case if and only if we have (\ref{cond-Psi-1-2}). Noting (\ref{denom-pos}),
we observe that (\ref{cond-Psi-1-2}) is equivalent to the two conditions
\begin{align*}
\Psi(1) \leq r-q \quad \text{and} \quad \Psi(1) - \Psi(2) \leq - (r-q),
\end{align*}
and, in view of the cumulant generating function given by (\ref{cumulant}), these two conditions are fulfilled if and only if we have (\ref{mmm-cond-1}) and (\ref{mmm-cond-2}).

Now suppose that (\ref{cond-Psi-1-2}) is satisfied. We define the continuous functions
\begin{align*}
Y &: \mathbb{R} \rightarrow (0,\infty), \quad Y(x) := c + 1 - c e^x
\\ f &: (0,\infty) \rightarrow \mathbb{R}_+, \quad f(y) := y \ln y - (y-1).
\end{align*}
Then, by Lemma \ref{lemma-canon-decomp}, we can write the density process (\ref{density-proc}) as
\begin{align*}
\hat{Z}_t &= \mathcal{E} \bigg( - \int_0^{\bullet} \alpha_s dM_s \bigg)_t = \mathcal{E} \bigg( \int_0^{\bullet} \int_{\mathbb{R}} (c - c e^x) (\mu^X(ds,dx) - F(dx)ds) \bigg)_t
\\ &= \mathcal{E} \bigg( \int_0^{\bullet} \int_{\mathbb{R}} (Y(x) - 1) (\mu^X(ds,dx) - F(dx)ds) \bigg)_t, \quad t \in [0,T].
\end{align*}
Note that $Y$ is nondecreasing with $Y(0) = 1$. Therefore, we have
\begin{align*}
&\int_{\mathbb{R}} f(Y(x)) F(dx) \leq \int_{-\infty}^0 (1 - Y(x)) F(dx) + \int_0^{\infty} Y(x) \ln Y(x) F(dx).
\end{align*}
The first term is estimated as
\begin{align*}
\int_{-\infty}^0 (1 - Y(x)) F(dx) = -c \int_{-\infty}^0 (1 - e^x) F(dx) < \infty.
\end{align*}
Taking into account the estimates 
\begin{align*}
Y(x) &\leq 1 + e^x, \quad x \in \mathbb{R}
\\ \ln(1 + x) &\leq x, \quad x \geq 0
\end{align*}
we obtain, since $\lambda^+ > 2$ by assumption,
\begin{align*}
&\int_0^{\infty} Y(x) \ln Y(x) F(dx) \leq \int_0^{\infty} (e^x + 1) \ln (1 + c(1-e^x)) F(dx)
\\ &\leq -c \int_0^{\infty} (e^x + 1) (e^x - 1) F(dx) = -c \int_0^{\infty} (e^{2x} - 1) F(dx) < \infty.
\end{align*}
Consequently, we have
\begin{align*}
\int_{\mathbb{R}} f(Y(x)) F(dx) < \infty.
\end{align*}
By \cite[Cor. 6]{Esche-Schweizer} the process $\hat{Z}$ is a strictly positive $\mathbb{P}$-martingale.
Moreover, \cite[Prop. 7]{Esche-Schweizer} yields that $\beta = 0$ and $Y$ are the Girsanov parameters of $\hat{\mathbb{P}}$. Now \cite[Prop. 2]{Esche-Schweizer} gives us that $X$ is a L\'{e}vy process under $\hat{\mathbb{P}}$ whose characteristic triplet with respect to the truncation function $h \equiv 0$ is given by $(0,0,\hat{F})$, where, by taking into account (\ref{levy-measure}), the L\'{e}vy measure is given by
\begin{equation}\label{levy-measure-MMM}
\begin{aligned}
\hat{F}(dx) &= Y(x) F(dx) = ( c + 1 -c e^x ) F(dx) 
\\ &= \bigg[ \bigg( \frac{(c+1) \alpha^+}{x} e^{-\lambda^+ x} + \frac{-c \alpha^+}{x} e^{-(\lambda^+ - 1)x} \bigg) \mathbbm{1}_{(0,\infty)}(x)
\\ &\quad + \bigg( \frac{(c+1)\alpha^-}{|x|}  e^{-\lambda^- |x|} + \frac{-c \alpha^-}{|x|} e^{-(\lambda^- + 1)|x|} \bigg) \mathbbm{1}_{(-\infty,0)}(x) \bigg] dx.
\end{aligned}
\end{equation}
Thus, the distribution of $X$ under $\hat{\mathbb{P}}$ is given by (\ref{X-under-MMM}).
\end{proof}

We proceed with an application to quadratic hedging. Here we assume that $\lambda^+ > 3$, which ensures that $\hat{\Psi}(1)$ and $\hat{\Psi}(2)$ appearing in (\ref{Delta-strategy}) below are well-defined.
Let $\Phi : \mathbb{R} \rightarrow \mathbb{R}$ be a payoff function. The prices at time $t \in [0,T]$ are given by $\pi(t,S_t)$, where
\begin{align*}
\pi(t,S) = \mathbb{E}_{\hat{\mathbb{P}}} [ \Phi(S_T) \, | \, S_t = S ].
\end{align*}
For example, the price at time $t$ of a European call option with strike price $K$ and maturity $T$ is given by
\begin{align*}
\pi(t,S) &= - \frac{e^{-r(T-t)} K}{2 \pi} \int_{i \nu - \infty}^{i \nu + \infty} \bigg( \frac{K}{S} \bigg)^{iz} \bigg( \frac{\lambda^+}{\lambda^+ + iz} \bigg)^{(c+1)\alpha^+ (T-t)} 
\\ &\qquad\qquad\qquad\qquad \times \bigg( \frac{\lambda^-}{\lambda^- - iz} \bigg)^{(c+1)\alpha^- (T-t)} \bigg( \frac{\lambda^+ - 1}{\lambda^+ - 1 + iz} \bigg)^{-c \alpha^+ (T-t)} 
\\ &\qquad\qquad\qquad\qquad \times \bigg( \frac{\lambda^- + 1}{\lambda^- + 1 - iz} \bigg)^{-c \alpha^- (T-t)} \frac{dz}{z(z-i)},
\end{align*}
where $\nu \in (1,\lambda^+ - 1)$ is arbitrary. This follows from the form of the characteristic function (\ref{cf-MMM}) and \cite[Thm. 3.2]{Lewis}.

Following \cite[Sec. 10.4]{Cont-Tankov}, we choose our trading strategy $\xi$ as
\begin{align*}
\xi_t = \Delta(t,S_t), \quad t \in [0,T]
\end{align*}
with $\Delta$ given by
\begin{equation}\label{Delta-strategy}
\begin{aligned}
\Delta(t,S) &= \frac{\int_{\mathbb{R}} (e^x - 1) (\pi(t,S e^x) - \pi(t,S)) \hat{F}(dx)}{S \int_{\mathbb{R}} (e^x - 1)^2 \hat{F}(dx)}
\\ &= \frac{\int_{\mathbb{R}} (e^x - 1) (\pi(t,S e^x) - \pi(t,S)) \hat{F}(dx)}{S (\hat{\Psi}(2) - 2 \hat{\Psi}(1))},
\end{aligned}
\end{equation}
where $\hat{F}$ denotes the L\'{e}vy measure derived in (\ref{levy-measure-MMM}) and $\hat{\Psi}$ denotes the cumulant generating function of $X_1$ under $\hat{\mathbb{P}}$ given by
\begin{align*}
\hat{\Psi}(z) &= (c+1)\alpha^+ \ln \bigg( \frac{\lambda^+}{\lambda^+ - z} \bigg) + (c+1) \alpha^- \ln \bigg( \frac{\lambda^-}{\lambda^- + z} \bigg)
\\ &\quad -c \alpha^+ \ln \bigg( \frac{\lambda^+ - 1}{\lambda^+ - 1 - z} \bigg) - c \alpha^- \ln \bigg( \frac{\lambda^- + 1}{\lambda^- + 1 + z} \bigg), \quad z \in (-\lambda^-,\lambda^+ - 1).
\end{align*}
Note that the strategy $\xi$ is, in general, not self-financing. It minimizes the hedging error
\begin{align*}
\mathbb{E}_{\hat{\mathbb{P}}} \Bigg[ \bigg| \Phi(S_T) - \bigg( \mathbb{E}_{\hat{\mathbb{P}}}[\Phi(S_T)] + \int_0^T \xi_t dS_t \bigg) \bigg|^2 \Bigg]
\end{align*}
with respect to the $L^2(\mathbb{\hat{P}})$-distance. Therefore, we have
\begin{align}\label{FS-decomp}
\Phi(S_T) = \mathbb{E}_{\hat{\mathbb{P}}} [\Phi(S_T)] + \int_0^T \xi_t dS_t + N_T,
\end{align}
where $N$ is a local martingale such that $\int_0^T \xi_t dS_t$ and $N_T$ are orthogonal in $L^2(\hat{\mathbb{P}})$. 
The decomposition (\ref{FS-decomp}) is the so-called \textit{F\"ollmer-Schweizer decomposition}. The orthogonal component $N$ is the intrinsic risk which cannot be hedged. Since the measure change from $\hat{\mathbb{P}}$ to $\mathbb{P}$ preserves orthogonality, $\int_0^T \xi_t dS_t$ and $N_T$ are also orthogonal in $L^2(\mathbb{P})$. Therefore, the strategy $\xi$ also minimizes the quadratic hedging error
\begin{align*}
\mathbb{E}_{\mathbb{P}} \Bigg[ \bigg| \Phi(S_T) - \bigg( \mathbb{E}_{\hat{\mathbb{P}}}[\Phi(S_T)] + \int_0^T \xi_t dS_t \bigg) \bigg|^2 \Bigg]
\end{align*}
with respect to the physical probability measure $\mathbb{P}$.

\begin{remark}
Analogous results of this section are also valid for general tempered stable processes. The concrete conditions (\ref{mmm-cond-1}), (\ref{mmm-cond-2}) on the parameters differ considerably in the tempered stable case, due to the different form of the cumulant generating function $\Psi$.
\end{remark}

\section{A numerical illustration}\label{sec-numerics}

\begin{figure}[t]
   \centering
   \includegraphics[height=40ex,width=50ex]{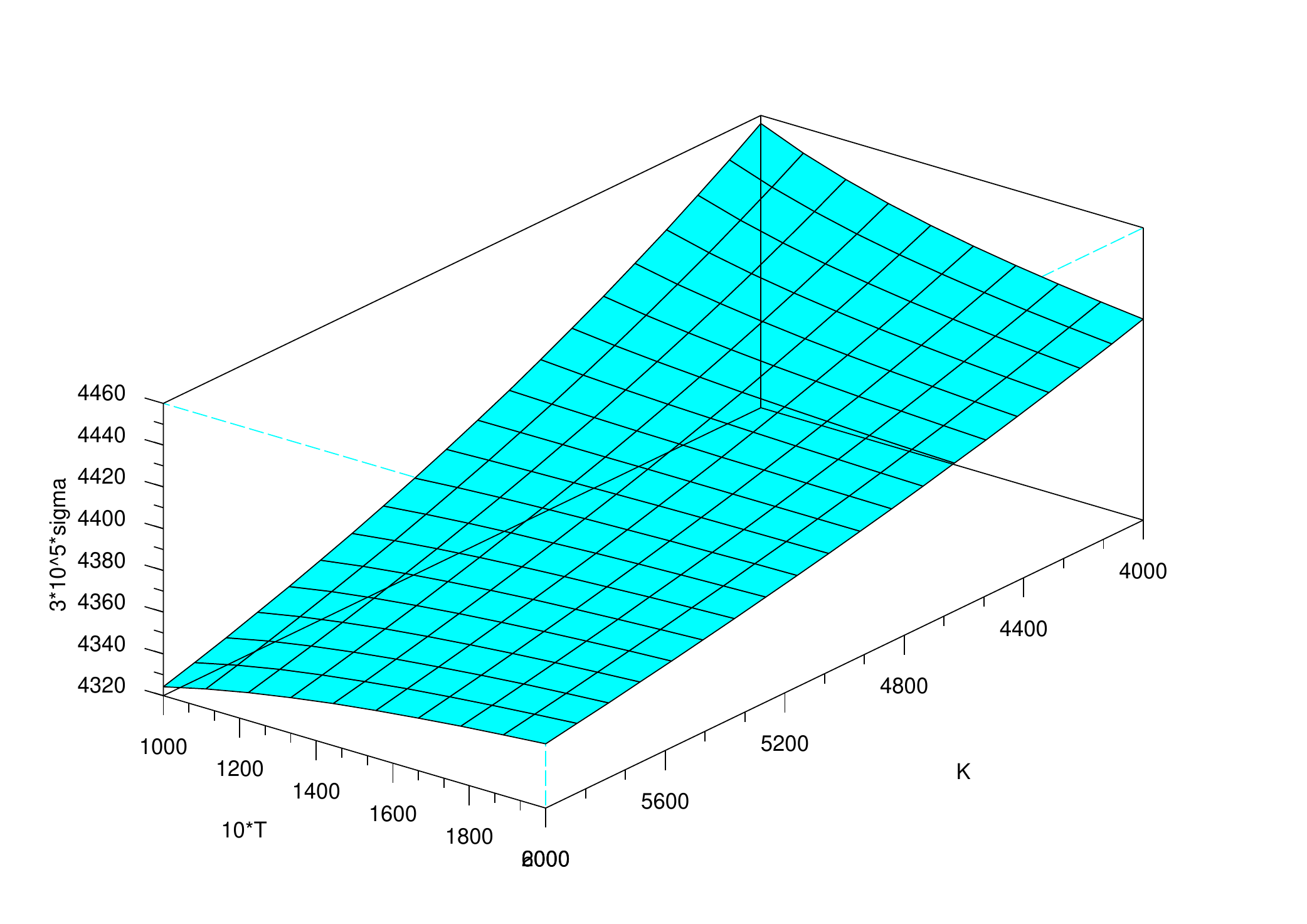}
   \caption{Implied volatility surface.}
   \label{fig-impl-vol}
\end{figure}

We conclude this article with a numerical illustration of the
preceding results. For a bilateral Gamma stock model of the type
(\ref{exp-Levy-model}) we have estimated the parameters of the
bilateral Gamma process $X$ as
\begin{align*}
(\alpha^+,\lambda^+;\alpha^-,\lambda^-) = (1.55,133.96;0.94,88.92)
\end{align*}
from observations of the German stock index DAX, see \cite[Sec.
9]{Kuechler-Tappe}. For our upcoming calculations, we take the
initial stock price $S_0 = 5000$ and, for simplicity, we put $r = q
= 0$.

We start with the computation of the Esscher transform from Section
\ref{sec-Esscher}. Note that condition
(\ref{cond-for-Esscher-lambda}) is fulfilled. Solving equation
(\ref{Esscher-equation}), we obtain the Esscher transform
$\mathbb{P}^{\Theta}$ with $\Theta = -5.28$. Under the measure
$\mathbb{P}^{\Theta}$, the driving process $X$ is bilateral Gamma
with parameters
\begin{align*}
(\alpha^+,\lambda^+ - \Theta;\alpha^-,\lambda^- + \Theta) =
(1.55,139.24;0.94,83.64).
\end{align*}
The value of the relative entropy, provided by
(\ref{insert-Esscher}), is given by
\begin{align}\label{entropy-1}
\mathbb{H}(\mathbb{P}^{\Theta} \,|\, \mathbb{P}) = 0.00294113.
\end{align}
We proceed with the minimal entropy martingale measure from Section
\ref{sec-minimal-entropy}. We have
\begin{align}\label{rhs}
\alpha^+ \ln \left( \frac{\lambda^+}{\lambda^+ - 1} \right) +
\alpha^- \ln \left( \frac{\lambda^-}{\lambda^- + 1} \right) =
0.0000221,
\end{align}
whence condition (\ref{entropy-inequality}) is satisfied. Solving
(\ref{eqn-minimize-entropy}), we obtain the minimal entropy
martingale measure $\mathbb{P}_{\vartheta}$ with $\vartheta =
-5.30$. Using (\ref{value-entropy}), the value of the minimal
entropy is given by
\begin{align}\label{entropy-2}
\mathbb{H}(\mathbb{P}_{\vartheta} \,|\, \mathbb{P}) = 0.00294091.
\end{align}
Finally, we turn to the computation of the bilateral Esscher
transform from Section \ref{sec-bilateral-Esscher}. Minimizing the
function $f$ given by (\ref{def-f}), we obtain the bilateral Esscher
transform $\mathbb{P}^{(\theta,\Phi(\theta))}$ with $\theta =
-5.34$. We observe that the Esscher parameter $\Theta$ and the
bilateral Esscher parameter $\theta$ are quite close to each other,
see Remark \ref{remark-Esscher}. Under the measure
$\mathbb{P}^{(\theta,\Phi(\theta))}$, the driving process $X$ is
bilateral Gamma with parameters
\begin{align*}
(\alpha^+,\lambda^+ - \theta;\alpha^-,\lambda^- - \Phi(\theta)) =
(1.55,139.30;0.94,83.68).
\end{align*}
Computing the relative entropy according to (\ref{compute-entropy})
yields
\begin{align}\label{entropy-3}
\mathbb{H}(\mathbb{P}^{(\theta,\Phi(\theta))} \,|\, \mathbb{P}) =
0.00294107.
\end{align}
The relative entropies computed in (\ref{entropy-1}),
(\ref{entropy-2}), (\ref{entropy-3}) show that both, the Esscher
transform $\mathbb{P}^{\Theta}$ and the bilateral Esscher transform
$\mathbb{P}^{(\theta,\Phi(\theta))}$ are quite close to the minimal
entropy martingale measure $\mathbb{P}_{\vartheta}$.

For $p > 1$ we can compute the $p$-optimal martingale measure within the class of bilateral Gamma processes by minimizing the function $f_p$ given by (\ref{def-f-p}). In particular, we obtain the variance-optimal martingale measure as the bilateral Esscher transform $\mathbb{P}^{(\theta_2,\Phi(\theta_2))}$ with $\theta_2 = -5.68$. Furthermore, we observe that $\theta_p \rightarrow \theta$ for $p \downarrow 1$, a behaviour, which we have discussed at the end of Section \ref{sec-bilateral-Esscher}.

The minimal martingale measure from Section \ref{sec-mmm} does not exist for the present parameters, because the constant $c$ defined in (\ref{def-c}) is positive. However, introducing a small interest rate, e.g. we could take $r = 0.0012$, then condition (\ref{cond-Psi-1-2}) is fulfilled, and hence the minimal martingale measure $\hat{\mathbb{P}}$ exists.

Going back to the minimal entropy martingale measure $\mathbb{P}^{(\theta,\Phi(\theta))}$,
we can numerically compute the prices of European call options by
using formula (\ref{option-pricing}). Figure \ref{fig-impl-vol}
shows the implied volatility surface. We observe the following
properties:

\begin{itemize}
\item The dependence of the implied volatility with respect to the
strike price $K$ is decreasing, we have a so-called "skew".

\item The skew flattens out for large times of maturity $T$.
\end{itemize}

Hence, the implied volatility surface in Figure \ref{fig-impl-vol}
has the typical features, which one observes in practice.

For any positive integer $n \in \mathbb{N}$ we can express $Y \sim \Gamma(n \alpha^+,\lambda^+;n \alpha^-, \lambda^-)$ as $Y = X_1 + \ldots + X_n$, where the $X_i$ are i.i.d. with $X_i \sim \Gamma(\alpha^+,\lambda^+,\alpha^-,\lambda^-)$. Therefore, if the parameters are fixed, a random variable $X \sim \Gamma(\alpha^+ t,\lambda^+,\alpha^- t,\lambda^-)$ has the tendency to be approximately normally distributed for increasing $t$, and therefore the implied volatility surface becomes flatter with increasing $t$. This observation also explains why the skew in Figure \ref{fig-impl-vol} flattens out for large times of maturity $T$.

\subsection*{Acknowledgement}

The second author gratefully acknowledges the support from WWTF (Vienna Science and Technology Fund).

We are grateful to Damir Filipovi\'c, Friedrich Hubalek, Katja Krol, Michael Kupper and Antonis Papapantoleon for their helpful remarks and discussions. We also thank two anonymous referees for their helpful comments and suggestions.

\end{document}